\newcommand{\G}{\mathcal{G}}
\newcommand{\N}{\mathbb{N}}
\newtheorem{theorem}{Theorem}
\newtheorem{proposition}{Proposition}
\newtheorem{lemma}{Lemma}
\newtheorem{question}{Question}
\newproof{proof}{Proof}
\begin{document}

\begin{frontmatter}
\title{Alternative polynomial-time algorithm for Bipartite Matching}
\author{Sylvain Guillemot}
\ead{guillemo@free.fr}

\begin{abstract}
If $G$ is a bipartite graph, Hall's theorem \cite{H35} gives a condition for the existence of a matching of $G$ covering one side of the bipartition. This theorem admits a well-known algorithmic proof involving the repeated search of augmenting paths. We present here an alternative algorithm, using a game-theoretic formulation of the problem. We also show how to extend this formulation to the setting of balanced hypergraphs.
\end{abstract}
\end{frontmatter}

\section{\label{s:intro}Introduction}

Let $G = (V,E)$ be a bipartite graph with bipartition $V_1, V_2$. Hall's theorem \cite{H35} states that $G$ has a matching covering $V_1$ if for each $S$ subset of $V_1$, there are at least $|S|$ vertices in $V_2$ adjacent to $S$. This theorem can be proved and extended in a number of ways (see e.g. \cite{HV50,R67,T54,CCKV96,AH00}). It has strong connections to other results about bipartite graphs such as K\"onig theorem for maximum cardinality matchings \cite{K31} and Egerv\'ary theorem for maximum weighted matchings \cite{E31}.

A well-known algorithmic proof of Hall's theorem relies on the notion of augmenting path, which can be traced back to \cite{K31,B57}. This technique actually yields a direct algorithmic proof of K\"onig theorem, as well as an indirect algorithmic proof of Egerv\'ary theorem using the Hungarian method \cite{K55}. This technique also applies to other problems more general than bipartite matching, such as in Edmonds' algorithm for nonbipartite matching \cite{E65}, Lawler's algorithm for matroid intersection \cite{L75}, and Gabow \& Stallman's algorithm for linear matroid parity \cite{GS86}. Yet, there is one situation where this technique does not seem to apply: to compute a maximum matching of a balanced hypergraph \cite{B72}. Such a matching can be computed in polynomial time by LP techniques, due to a min-max theorem of \cite{BLV70} generalizing K\"onig theorem. However, finding a purely combinatorial algorithm for this problem remains an important open question in the field, which does not seem amenable to the previous technique. This led us to seek an alternative algorithm for bipartite matching, that may extend to balanced hypergraphs.

Such an alternative algorithm is presented here. It relies on a folklore game-theoretic formulation of the problem, that is a variant of Slither \cite{A74}. We consider a two-person game played on a graph $G$ with a distinguished vertex $v_0$. Starting at $v_0$, each player in turn chooses a previously unchosen vertex, with the restriction that the sequence of vertices chosen forms a path in $G$; the first player unable to move loses. It follows from the results of \cite{A74} that a strategy for this game can be expressed in terms of maximum matchings on the underlying graph. We are concerned here with the case of a bipartite graph $G$, and in this case it turns out that the game has a memoryless strategy, called an \emph{assignment}. We give a simple polynomial-time algorithm that computes an assignment for a given pair $(G,v_0)$; we observe here that the polynomial-time termination of the algorithm is nontrivial and crucially relies on $G$ being bipartite. We are also able to extend the previous game-theoretic formulation to balanced hypergraphs; the game and the notion of assignment can be adapted to this setting, although we don't know here how to efficiently compute an assignment.

This paper is organized as follows. Section \ref{s:graph-version} deals with the case of bipartite graphs. We first define the game and the notion of assignment. We then explain how these notions provide an alternative proof of Hall's theorem. Finally, we describe a $O(n^3)$ time algorithm computing an assignment in a graph with $n$ vertices, which makes the alternative proof algorithmic. Section \ref{s:hypergraph-version} deals with the case of balanced hypergraphs. We adapt the game and the notion of assignment to this setting, and we prove non-constructively the existence of an assignment.

\section{\label{s:graph-version}A game-theoretic formulation of Hall's theorem}

\subsection{\label{ss:graph-defs} Preliminaries}

The graphs we consider are undirected, finite and simple. Let $G = (V,E)$ be a graph, where $V$ is its set of vertices and $E \subseteq [V]^2$ is its set of edges. Given $e = \{u,v\} \in E$, we denote $e = uv$. We say that $u,v$ are \emph{adjacent} if $uv \in E$, and we say that $u$ is \emph{incident} to $e$ if $u \in e$. Given $U \subseteq V$, we define the \emph{induced subgraph} $G[U] = (U,\{ uv \in E : u,v \in U \})$. We say that $G$ is \emph{bipartite} (with bipartition $V_1,V_2$) iff $V = V_1 \cup V_2$ and each edge of $G$ has the form $uv$ with $u \in V_1, v \in V_2$.

A \emph{walk} in $G$ is a sequence $W = v_1 v_2 \ldots v_k$ where $v_1,\ldots,v_k \in V$, and for each $i \in [k-1]$ we have $v_i v_{i+1} \in E$. The \emph{support} of $W$ is $V(W) = \{v_1,\ldots,v_k\}$; the \emph{length} of $W$ is $|W| = k-1$. By convention, the empty sequence $\epsilon$ is a walk of support $\emptyset$ and of length $-1$. A \emph{path} in $G$ is a walk $P = v_1 v_2 \ldots v_k$ such that: for $\{i,j\} \subseteq [k]$ we have $v_i \neq v_j$. Given $U \subseteq V$, the \emph{neighborhood of $U$} is $N_G(U) = \{ v \in V - U : \text{ there exists } uv \in E \text{ with } u \in U \}$. Given $u \in V$, we abbreviate $N_G(\{u\})$ by $N_G(u)$.

We recall Hall's theorem below. We first need the following additional definitions.
Let $S \subseteq E$ and $T \subseteq V$. We say that $S$ is a \emph{matching} of $G$ if it is formed by pairwise disjoint edges. We say that $S$ \emph{covers} $T$ iff each vertex of $T$ is incident to an edge of $S$. Clearly, if $G$ is a bipartite graph with bipartition $V_1,V_2$ having a matching covering $V_1$, it must hold that $|N_G(S)| \geq |S|$ for every $S \subseteq V_1$. Hall's theorem states that the converse is true (see \cite{HV50,R67} for alternative proofs).

\begin{theorem} \label{thm1} \cite{H35} Let $G = (V,E)$ be a bipartite graph with bipartition $V_1,V_2$. Suppose that for every $S \subseteq V_1$ we have $|N_G(S)| \geq |S|$. Then $G$ has a matching covering $V_1$.
\end{theorem}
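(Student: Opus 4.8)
The plan is to follow the game-theoretic route announced in the introduction rather than the classical augmenting-path induction. The first ingredient I would establish is the correspondence between the path game on $(G,v_0)$ and matchings: for a bipartite $G$ with $v_0 \in V_1$, the player who moves first wins exactly when every maximum matching of $G$ covers $v_0$, and the opponent's winning replies can be taken memoryless, namely ``from the current vertex $w \in V_2$, move to its matching partner''. Concretely, if $M$ is a maximum matching missing $v_0$, then along any play every vertex the mover reaches must be $M$-saturated (otherwise the traversed path, which is $M$-alternating by construction, would be an $M$-augmenting path from $v_0$, contradicting maximality), so the reply ``go to $M(\cdot)$'' is always available and the first player is eventually stuck. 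This is the place where I expect to lean on Berge's augmenting-path theorem, and it is the main conceptual obstacle if one insists---as the paper does---on avoiding a pre-computed maximum matching; the role of the forthcoming notion of assignment is precisely to supply such a memoryless winning strategy directly.

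Second, I would reduce Hall's theorem to a statement about the game. Since every edge of a matching meets $V_1$ in exactly one vertex, a matching of size $|V_1|$ must cover all of $V_1$; hence $G$ has a matching covering $V_1$ if and only if $\nu(G) = |V_1|$, that is, if and only if no vertex of $V_1$ is missed by a maximum matching. By the correspondence above this is equivalent to the assertion that for every $v_0 \in V_1$ the first player wins the game started at $v_0$.

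Third, I would argue by contradiction. Assume the Hall condition holds but some maximum matching $M$ misses a vertex $v_0 \in V_1$; then the opponent wins with the memoryless strategy $\sigma(\cdot) = M(\cdot)$. Let $S \subseteq V_1$ be the set of vertices occurring as first-player positions in plays consistent with $\sigma$ and starting at $v_0$; equivalently, $S$ is $v_0$ together with the $V_1$-endpoints of $M$-alternating paths out of $v_0$ (the classical Hungarian tree). Every $w \in N_G(S)$ is then $M$-saturated with $M(w) \in S$, and since $v_0$ is exposed the matching restricts to a bijection $M \colon N_G(S) \to S \setminus \{v_0\}$. This yields $|N_G(S)| = |S| - 1 < |S|$, contradicting the hypothesis, and completes the argument. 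The only genuinely delicate point, as noted, is the first step: making the correspondence constructive---so that the winning memoryless strategy (the assignment) is produced without presupposing a maximum matching---is exactly what must be done to keep the proof self-contained and algorithmic.
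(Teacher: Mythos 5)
Your argument is correct, but it is not the paper's proof; at bottom it is the classical augmenting-path proof that the paper is explicitly written to avoid. Notice that your step 3 is already a complete, self-contained proof of the theorem: if no matching covers $V_1$, take a maximum matching $M$ and an $M$-exposed $v_0 \in V_1$, grow the alternating tree, and the set $S$ of reachable $V_1$-vertices satisfies $|N_G(S)| = |S|-1 < |S|$. The game in your steps 1--2 is only a narrative wrapper around this: the winning strategy you exhibit is ``follow $M$,'' which presupposes a maximum matching and the augmenting-path characterization, i.e.\ exactly the ingredients the paper refuses to use. The paper proceeds in the opposite direction. It defines an assignment $(R,\sigma)$ for the auxiliary graph $(G',v_0)$ axiomatically via conditions $(C_1)$--$(C_3)$, shows in Proposition \ref{prop2} that any assignment directly yields either a matching covering $V_1$ (when $\sigma(v_0)\neq\perp$) or a Hall violator $S$ with $|N_G(S)|<|S|$ (when $\sigma(v_0)=\perp$), and then establishes Theorem \ref{thm2}: an assignment always exists and is computed by a combinatorial game-tree exploration (Algorithm \ref{algo1}) whose polynomial termination is the real technical content (Lemma \ref{lem2}); under Hall's condition the case $\sigma(v_0)=\perp$ is ruled out, so the algorithm outputs a covering matching. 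Your route buys brevity and uses only standard tools; the paper's route buys precisely what you flag as the ``delicate point''---a memoryless winning strategy produced without presupposing a maximum matching---which is the feature the author needs in order to hope for an extension to balanced hypergraphs, where the augmenting-path machinery is unavailable.
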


\subsection{\label{ss:the-game} The game $\G(G',v_0)$}

Let $G = (V,E)$ be a bipartite graph with bipartition $V_1,V_2$. We introduce below a game played on a bipartite graph $G'$ derived from $G$, such that the existence of a matching covering $V_1$ translates to a winning strategy for this game. This is inspired by the Slither game of \cite{A74}.

Let $G' = (V',E')$ obtained from $G$ by adding (1) a vertex $v_1$ adjacent to all vertices of $V_1$, (2) a vertex $v_0$ adjacent to $v_1$. We consider the following two-player game $\G(G',v_0)$. A play is a sequence $P = v_0 v_1 \ldots v_k$ that is a path in $G'$. The game starts with $P = v_0$. Suppose that it is the turn of Player $s \in \{1,2\}$, and that we have $P = v_0 v_1 \ldots v_k$ with $k-1 \equiv s~(2)$. Let $Z = \{ v \in V' - V(P) : v_k v \in E' \}$. The player loses if $Z = \emptyset$; otherwise, he chooses $v \in Z$, lets $v_{k+1} = v$, and hands over to the other player. It can be shown that Player 1 has a winning strategy for this game iff $G$ has a matching covering $V_1$.

An \emph{assignment} for $(G',v_0)$ is a pair $(R,\sigma)$, where $R \subseteq V'$, $v_0 \in R$, and $\sigma : V' \rightarrow V' \cup \{\perp\}$ is such that:
\begin{itemize}
\item[$(C_1)$] for $v \in R$, if $\sigma(v) = u \in V'$ then $u \in N_{G'}(v) \cap R$ and $\sigma(u) = \perp$;
\item[$(C_2)$] for $v \in R$, if $\sigma(v) = \perp$ then for every $u \in N_{G'}(v)$ we have $u \in R$ and $\sigma(u) \neq \perp$;
\item[$(C_3)$] for $v \in R$, we have $|\sigma^{-1}(v)| \leq 1$, and $\sigma^{-1}(v_0) = \emptyset$.
\end{itemize}
Informally, $(R,\sigma)$ defines a strategy for playing the game $\G(G',v_0)$; here, $R$ is the set of reachable positions, and a position $v \in R$ is \emph{winning} if $\sigma(v) \neq \perp$, and \emph{losing} otherwise. The intuition behind this definition is as follows. Condition $(C_1)$ expresses that for a winning position $v$, \textit{some} adjacent position $u$ is losing; this information is provided by $u = \sigma(v)$. Condition $(C_2)$ expresses that for a losing position $v$, \emph{every} adjacent position $u$ is winning. Finally, Condition $(C_3)$ is necessary to ensure that the strategy does not loop on some vertex.

The following result explains how to obtain a winning strategy for either player, given an assignment.

\begin{proposition} \label{prop1} Suppose that we have $(R,\sigma)$ assignment for $(G',v_0)$.
\begin{enumerate}
\item If $\sigma(v_0) \neq \perp$, Player 1 has a winning strategy in $\G(G',v_0)$.
\item If $\sigma(v_0) = \perp$, Player 2 has a winning strategy in $\G(G',v_0)$.
\end{enumerate}
\end{proposition}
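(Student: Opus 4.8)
The plan is to exhibit an explicit strategy for the appropriate player and prove it always wins. The strategy is simply to follow $\sigma$: whenever a player must move from the current endpoint $v_k$, they play $\sigma(v_k)$ if it is defined (i.e. in $V'$), and they are in trouble only if forced to move from a losing position. So first I would set up the correspondence between the two conditions $(C_1),(C_2)$ and the two claims of the proposition, and identify which player is ``following $\sigma$'' in each case.

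For part (1), assume $\sigma(v_0) \neq \perp$. I claim Player~1 can maintain the invariant that before each of Player~1's moves, the current endpoint $v_k$ lies in $R$ with $\sigma(v_k) = \perp$ (a losing position), and after Player~1 moves, the new endpoint is in $R$ with $\sigma$-value $\neq \perp$ (a winning position). Player~1's rule is: from a winning position $v_k$, play $u = \sigma(v_k)$. Condition $(C_1)$ guarantees $u \in N_{G'}(v_k) \cap R$ and $\sigma(u) = \perp$, so this is a legal move landing in a losing position, \emph{provided} $u \notin V(P)$. Then when it is Player~2's turn at a losing position $u$, either $Z = \emptyset$ and Player~2 loses immediately, or Player~2 moves to some $w \in N_{G'}(u)$; by $(C_2)$, $w \in R$ and $\sigma(w) \neq \perp$, restoring the invariant. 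The base case uses $v_0 \in R$ and the hypothesis $\sigma(v_0) \neq \perp$ to let Player~1 open with $\sigma(v_0)$. Part (2) is the mirror image: when $\sigma(v_0) = \perp$, Player~2 follows $\sigma$ from the winning positions that Player~1 is forced to hand over, using $(C_1)$ for Player~2's moves and $(C_2)$ to show Player~1 always lands back on a winning position (and in particular can never reach a position outside $R$ or a losing leaf).

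The main obstacle, and the only genuinely nontrivial point, is the termination/legality issue hidden in the phrase ``provided $u \notin V(P)$'': a play must be a \emph{path}, so the strategy must never revisit a vertex, and the game must actually end with the opponent stuck. This is exactly what Condition $(C_3)$ is designed to handle. I would argue that the move dictated by $\sigma$ is always to a fresh vertex: if the strategy-following player ever tried to move to some $u \in V(P)$ already on the path, then $u$ was entered earlier either as a $\sigma$-image or as the source of a $\sigma$-edge, and the constraint $|\sigma^{-1}(v)| \le 1$ together with $\sigma^{-1}(v_0) = \emptyset$ forbids two distinct winning vertices pointing their $\sigma$ to the same losing vertex and forbids ever returning to $v_0$, which is what would be required to close a cycle. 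Hence no vertex can be chosen twice, each move strictly lengthens the path, $V'$ is finite, and the play terminates; since the strategy-following player can always respond to any legal opponent move, it is the opponent who is eventually left with $Z = \emptyset$.

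I expect the bookkeeping for part~(2) to be slightly more delicate than part~(1), because there Player~2 reacts second and one must check that the very first hand-over from Player~1 lands on a winning position (using $(C_2)$ at $v_0$, since $\sigma(v_0)=\perp$), but the structure of the argument — invariant plus $(C_3)$-driven acyclicity — is identical. No deep idea beyond carefully chaining $(C_1)$, $(C_2)$, and $(C_3)$ is needed.
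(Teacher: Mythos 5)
Your proposal is correct and follows essentially the same route as the paper: the strategy of playing $\sigma(v_k)$ from winning positions, the alternating invariant proved by induction on the length of the play, and the use of $(C_3)$ together with $\sigma^{-1}(v_0)=\emptyset$ to rule out revisiting a vertex are exactly the paper's argument. (One small slip: the first sentence of your part-(1) invariant swaps the winning/losing labels relative to the rule you then apply, but the rest of the paragraph makes the intended invariant clear.)
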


\begin{proof} We only prove (1), since (2) follows by a similar argument. We maintain the following invariant: at step $k$ of the game, the current play is $P = v_0 v_1 \ldots v_k$ with: (a) $P$ path in $G'$, (b) for an integer $i$ $(0 \leq i \leq k)$ even, we have $v_i \in R, \sigma(v_i) \neq \perp$ and ($i < k \Rightarrow v_{i+1} = \sigma(v_i)$), (c) for an integer $i$ ($0 \leq i \leq k$) odd, we have $v_i \in R, \sigma(v_i) = \perp$. Player 1 uses the following strategy: if $P = v_0 v_1 \ldots v_k$ with $k$ even, he plays $v_{k+1} = \sigma(v_k)$. We show by induction on $k$ that the invariant is preserved.

Suppose first that $k$ is odd. Let $Z = \{ v \in V' - V(P) : v_k v \in E' \}$. If $Z = \emptyset$, Player 2 loses. Suppose that $Z \neq \emptyset$ and Player 2 chooses a vertex $v_{k+1} \in Z$. We then have Point (a) at step $k+1$ by the definition of the game. Since $v_k \in R$ and $\sigma(v_k) = \perp$, by Condition $(C_2)$ we have $v_{k+1} \in R$ and $\sigma(v_{k+1}) \neq \perp$. These facts together with the induction hypothesis imply that Points (b)-(c) hold at step $k+1$.

Suppose now that $k$ is even. Player 1 then chooses $v_{k+1} = \sigma(v_k)$. Since $v_k \in R$ and $\sigma(v_k) \neq \perp$, by Condition $(C_1)$ we have $v_k v_{k+1} \in E'$, $v_{k+1} \in R$ and $\sigma(v_{k+1}) = \perp$. If we had $v_{k+1} = v_{i+1}$ for some $i < k$, we would have $i$ even and $v_k,v_i \in \sigma^{-1}(v_{k+1})$, contradicting $(C_3)$. We conclude that $P' = v_0 v_1 \ldots v_{k+1}$ is a path in $G'$, which proves Point (a) at step $k+1$. Finally, Points (b) and (c) hold at step $k+1$ by induction hypothesis and since $v_{k+1} = \sigma(v_k)$, $v_{k+1} \in R, \sigma(v_{k+1}) = \perp$.
\qed
\end{proof}

The following proposition shows how to compute a matching or an obstruction from an assignment.

\begin{proposition} \label{prop2} Suppose that we have $(R,\sigma)$ assignment for $(G',v_0)$.
\begin{enumerate}
\item If $\sigma(v_0) \neq \perp$, we can obtain a matching of $G$ covering $V_1$.
\item If $\sigma(v_0) = \perp$, we can obtain a set $S \subseteq V_1$ such that $|N_G(S)| < |S|$.
\end{enumerate}
\end{proposition}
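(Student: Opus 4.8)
The plan is to first distill from the assignment a single structural fact that drives both parts. Writing $W = \{v \in R : \sigma(v) \neq \perp\}$ for the winning positions and $L = \{v \in R : \sigma(v) = \perp\}$ for the losing ones, Conditions $(C_1)$ and $(C_3)$ together say that $\sigma$ restricts to an \emph{injective} map $W \to L$: indeed $(C_1)$ gives $\sigma(W) \subseteq L$, while the bound $|\sigma^{-1}(v)| \leq 1$ in $(C_3)$ gives injectivity. I would also record the relevant adjacencies of $G'$ with respect to its bipartition $V_1 \cup \{v_0\}$, $V_2 \cup \{v_1\}$: the vertex $v_0$ is adjacent only to $v_1$; the vertex $v_1$ is adjacent to $v_0$ and to all of $V_1$; every $u \in V_2$ has $N_{G'}(u) = N_G(u) \subseteq V_1$; and every $w \in V_1$ has $N_{G'}(w) = N_G(w) \cup \{v_1\}$.

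For part (1), I would assume $\sigma(v_0) \neq \perp$. Since $v_1$ is the only neighbour of $v_0$, $(C_1)$ forces $\sigma(v_0) = v_1$ with $v_1 \in L$. Applying $(C_2)$ to $v_1$ then shows every neighbour of $v_1$ lies in $R \cap W$; in particular $V_1 \subseteq R$ and each $w \in V_1$ is winning. For such $w$ the value $\sigma(w)$ is a losing neighbour, and it cannot be $v_1$, since $v_1$ is already the $\sigma$-image of $v_0$ and $|\sigma^{-1}(v_1)| \leq 1$. Hence $\sigma(w) \in N_G(w) \subseteq V_2$, so $\{w, \sigma(w)\}$ is an edge of $G$. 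Injectivity of $\sigma$ on $W$ then makes $M = \{\{w, \sigma(w)\} : w \in V_1\}$ a matching of $G$, and it covers $V_1$ by construction.

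For part (2), I would assume $\sigma(v_0) = \perp$, so $v_0 \in L$. By $(C_2)$ its neighbour $v_1$ is winning, and $\sigma(v_1)$ is a losing neighbour of $v_1$ distinct from $v_0$ (as $\sigma^{-1}(v_0) = \emptyset$ by $(C_3)$); thus $w_0 := \sigma(v_1) \in V_1$ is losing and lies in $R$. I would then set $S = \{w \in V_1 \cap R : \sigma(w) = \perp\}$, the losing vertices of $V_1$, noting $w_0 \in S$. Applying $(C_2)$ to each $w \in S$ shows that $N_G(S)$ consists only of winning vertices of $V_2 \cap R$, and for each such $u$ we have $\sigma(u) \in N_G(u) \subseteq V_1$, losing and in $R$, i.e. $\sigma(u) \in S$. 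So $\sigma$ maps $N_G(S)$ injectively into $S$, which already yields $|N_G(S)| \leq |S|$.

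The crux is upgrading this to the strict inequality, by exhibiting a vertex of $S$ outside the image $\sigma(N_G(S))$. That vertex is $w_0 = \sigma(v_1)$: its unique $\sigma$-preimage is $v_1$, and $v_1 \notin N_G(S) \subseteq V_2$ because the auxiliary vertex $v_1$ is not a vertex of $V_2$. Hence $\sigma(N_G(S)) \subseteq S \setminus \{w_0\}$, and injectivity gives $|N_G(S)| = |\sigma(N_G(S))| \leq |S| - 1 < |S|$. I expect this last step to be the delicate one, since it is exactly here that the gadget $(v_0,v_1)$ and the clause $\sigma^{-1}(v_0) = \emptyset$ pay off: they manufacture the one-vertex deficiency that turns the bare counting tie $|N_G(S)| \leq |S|$ into a genuine Hall obstruction.
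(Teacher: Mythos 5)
Your proposal is correct and follows essentially the same route as the paper: part (1) uses $(C_2)$ at the losing vertex $v_1$ to cover $V_1$ and $(C_3)$ for disjointness, and part (2) takes the same set $S$ of losing vertices of $V_1$, injects $N_G(S)$ into $S$ via $\sigma$, and uses $\sigma(v_1)$ as the uncovered element forcing strict inequality. The only difference is that you spell out a couple of steps the paper leaves implicit (e.g.\ why $\sigma(w) \neq v_1$ in part (1)), which is fine.
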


\begin{proof} Point 1. Let $(R,\sigma)$ be an assignment for $(G',v_0)$ such that $\sigma(v_0) \neq \perp$. By $(C_1)$, we have $\sigma(v_0) = v_1$, $v_1 \in R$ and $\sigma(v_1) = \perp$.
Fix a vertex $u \in V_1$ and let $u' = \sigma(u)$. We have $u \in R$ and $u' \neq \perp$ by $(C_2)$; we have $u' \in R$ and $uu' \in E'$ by $(C_1)$; finally, by $(C_3)$ it follows that these edges are pairwise disjoint edges of $G$. We conclude that the set $M = \{ uu' : u \in V_1 \}$ is a matching of $G$ covering $V_1$. 

Point 2. Let $(R,\sigma)$ be an assignment for $(G',v_0)$ such that $\sigma(v_0) = \perp$. Let $S = \{ u \in R \cap V_1 : \sigma(u) = \perp \}$ and $T = N_G(S)$. We claim that: $|T| < |S|$.

We first show that for each $v \in T$, we have $v \in R \cap V_2$, $\sigma(v) \neq \perp$ and $\sigma(v) \in S$. Fix $v \in T$ and let $\sigma(v) = u'$ with $u' \in V' \cup  \{\perp\}$. By definition of $T$, we have $u \in S$ such that $uv \in E$. By definition of $S$, we have $u \in R \cap V_1$ and $\sigma(u) = \perp$. 
It follows by $(C_2)$ that $v \in R \cap V_2$ and $u' \neq \perp$, and it follows by $(C_1)$ that $u' \in R \cap V_1$ and $\sigma(u') = \perp$. We have thus shown that $v \in R \cap V_2$, $\sigma(v) \neq \perp$ and $\sigma(v) \in S$.

We infer by $(C_3)$ that $\sigma$ induces an injection from $T$ to $S$, and thus $|T| \leq |S|$. Let $r = \sigma(v_1)$. By $(C_2)$ and $(C_3)$, we have $r \in S$ and $\sigma^{-1}(r) = \{v_1\}$. We conclude that $|T| < |S|$.
\qed \end{proof}

In the rest of this section, we present a combinatorial algorithm for computing an assignment. Formally, we prove the following.

\begin{theorem} \label{thm2} Consider a pair $(G',v_0)$ as above, and suppose that $G'$ has $n$ vertices. We can compute in $O(n^3)$ time an assignment for $(G',v_0)$.
\end{theorem}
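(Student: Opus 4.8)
The plan is to compute the assignment $(R,\sigma)$ by an iterative process that starts from a trivial assignment and repeatedly repairs it, maintaining the invariant that $(R,\sigma)$ is a valid assignment on some growing portion of $V'$. The natural starting point is to let $R = \{v_0\}$ with $\sigma(v_0) = \perp$, which trivially satisfies $(C_1)$–$(C_3)$ vacuously, and then to grow $R$ towards the full vertex set by exploring reachable positions. At each stage I would attempt to extend $R$ by adding a vertex $v \in N_{G'}(R)$ that is not yet labeled, assigning it a tentative label, and checking whether the three conditions are still met.

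First I would make precise what goes wrong when we naively add a vertex. When we add a new vertex $v$ and tentatively set $\sigma(v) = \perp$ (a losing position), condition $(C_2)$ forces every neighbour of $v$ to be a winning position already in $R$; if some neighbour $u$ is currently losing, we have a conflict. The resolution is to flip $u$: reassign $\sigma(u) = v$ (making $u$ winning by pointing it at the freshly added losing vertex $v$). But flipping $u$ may in turn violate $(C_3)$, since $u$ may already be the target of some other winning vertex $w$ with $\sigma(w)=u$; now $w$ needs a new losing target, which propagates the repair backwards along the structure. This cascade of relabelings is exactly an augmenting-path-like alternating process, and it is the heart of the algorithm. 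So the key steps, in order, are: (i) describe the outer loop that picks an unprocessed vertex at the frontier of $R$; (ii) describe the inner repair procedure that installs its label and propagates flips through the alternating chain of winning/losing assignments; and (iii) argue that each pass either successfully extends $R$ or discovers a vertex that must be losing.

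The hard part, as the authors themselves flag in the introduction, will be the termination and complexity analysis rather than correctness. Correctness is relatively mechanical: one checks that after each repair step the pair $(R,\sigma)$ still satisfies $(C_1)$–$(C_3)$ on its current domain, which is a finite local verification at each modified vertex. The genuine obstacle is bounding the number of relabelings so that the total running time is $O(n^3)$, and it is precisely here that bipartiteness must be used. The natural measure of progress is some monotone quantity — for instance the number of losing vertices, or the total number of vertices whose label has stabilized — that cannot decrease, together with a bound of $O(n)$ repair operations per added vertex and $O(n)$ work per repair. I would therefore devote the core of the proof to exhibiting such a potential function and showing that a flip propagation chain in a bipartite graph cannot revisit vertices in a way that cycles, so that each of the $O(n)$ outer iterations triggers at most $O(n^2)$ total work; the absence of odd cycles is what rules out the pathological loops that would occur in a general graph and thereby guarantees polynomial termination.

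Finally I would note the base case and output: when the outer loop terminates, either $v_0$ has been labeled winning and we read off the matching via Proposition \ref{prop2}(1), or it is labeled losing and we extract the Hall obstruction via Proposition \ref{prop2}(2); in either case the computed $(R,\sigma)$ is a full assignment for $(G',v_0)$, completing the proof.
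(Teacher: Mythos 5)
Your overall architecture --- explore reachable positions, tentatively label a new vertex as losing, and propagate ``flips'' backwards along an alternating chain when $(C_2)$ or $(C_3)$ is violated --- is essentially the paper's algorithm (there the chain of pending repairs is materialized as an explicit path $P$ in $G'$, extended in Case 1 and contracted in Case 2). But your proof has a genuine gap exactly at the point you yourself flag as the hard part: the termination bound. The potential functions you propose (the number of losing vertices, or the number of vertices whose label has ``stabilized'') are \emph{not} monotone under the repair operation you describe: when the cascade reaches a decided pair $\{u,w\}$ with $\sigma(w)=u$, that pair loses its decided status and re-enters the undecided frontier, so the set of decided vertices can shrink, and the same vertex can be decided and un-decided many times over the run. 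Consequently ``$O(n)$ repair operations per added vertex'' does not follow from any monotonicity argument, and nothing in your sketch rules out a superpolynomial number of re-decisions. The paper's actual argument (its Lemma \ref{lem2}) is an amortized charging scheme: for each fixed vertex $v$, between the step that un-decides the pair containing $v$ and the step that re-decides it, the length of the repair path changes by an \emph{odd} amount; since ordinary extensions and contractions change that length by $\pm 2$, some intermediate step must introduce a brand-new vertex into $R$ (a $+1$ step), and each vertex is brand-new only once. This gives at most $2n$ events per vertex, hence $O(n^2)$ iterations and $O(n^3)$ time. The parity claim is where bipartiteness enters concretely: in bipartite $G'$, every occurrence of $v$ on a path from $v_0$ sits at a position of fixed parity. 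Your appeal to ``absence of odd cycles'' gestures at the right ingredient but is not an argument.

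A secondary issue: the invariant ``$(R,\sigma)$ is a valid assignment on a growing portion of $V'$'' cannot be maintained as stated. Because of the un-deciding phenomenon above, the decided region is not monotone, and the correct invariant must exempt the vertices currently sitting on the repair chain (the paper's Lemma \ref{lem1} restricts conditions (c),(d) to $Q = R - V(P)$ and further excludes neighbours on $P$ in (d)). Also, your base case $R=\{v_0\}$ with $\sigma(v_0)=\perp$ does not satisfy $(C_2)$ vacuously: $v_0$ has the neighbour $v_1$, which would already have to be in $R$ and winning. These points are fixable, but the termination analysis is not optional --- it is the substance of the theorem, and it is missing.
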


Combining this result with Proposition \ref{prop2}, we obtain an alternative algorithmic proof of Theorem \ref{thm1}. Using this algorithm with the Hungarian method of \cite{K55} then yields an alternative $O(n^4)$ time algorithm to compute a maximum weighted matching in a bipartite graph.

\subsection{\label{ss:algorithm}Computing an assignment}

We describe in this section the algorithm of Theorem \ref{thm2}. We will justify its correctness in Section \ref{ss:correctness}, and its running time in Section \ref{ss:running-time}.\\

We first give a high-level description of the algorithm. At a given step, we have a path $P = v_0 v_1 v_2 \ldots$ in $G'$ starting at $v_0$, a set $R \subseteq V'$, and a mapping $\sigma : V' \rightarrow V' \cup \{\perp\}$. We start with $P = v_0 v_1$, $R = \{v_0,v_1\}$ and $\sigma(u) = \perp$ for every $u \in V'$. The goal is to have at the end of the algorithm: $|P| \leq 0$ and $(R,\sigma)$ assignment for $(G',v_0)$. 

Consider a step of the algorithm, where we have $P = v_0 v_1 \ldots v_k$. Let $Z$ be the set of vertices $v \in V' - V(P)$ such that $v_k v \in E'$ and $\sigma(v) = \perp$. We have two cases.

Case 1: $Z \neq \emptyset$. We choose $v \in Z$, and we update $P = v_0 v_1 \ldots v_{k+1}$ with $v_{k+1} = v$. We add $v$ to $R$. If $\sigma^{-1}(v) = \{w\}$, we set $\sigma(w) = \perp$.

Case 2: $Z = \emptyset$. If $k > 0$, we set $\sigma(v_{k-1}) = v_k$ and $\sigma(v_k) = \perp$. If $k \leq 2$, the algorithm ends. Otherwise, we remove $v_{k-1}, v_k$ from $P$.\\

Intuitively, the algorithm explores the game tree for $\G(G',v_0)$ by identifying winning/losing positions on the fly. We maintain $P$ play of the game, $R \subseteq V'$ set of reached vertices and $\sigma : V' \rightarrow V' \cup \{\perp\}$ "partial" assignment for $(G',v_0)$. The status of a vertex $v \in V(P)$ is left undecided, while the status of a vertex $v \notin V(P)$ is either winning (if $\sigma(v) \neq \perp$) or losing (if $\sigma(v) = \perp$). At a given step, we have $P = v_0 v_1 \ldots v_k$, and we look for the next vertex $v_{k+1}$. For the position $v_k$ to be winning, we should have $v_{k+1}$ to be losing, and thus we seek $v_{k+1}$ in the set $Z$ defined above. In Case 1, we choose $v_{k+1}$ in $Z$ and we add it to $R$. In Case 2, the fact that $Z = \emptyset$ means that there is no way for $v_k$ to be winning. It follows that $v_k$ is losing, and that $v_{k-1}$ is winning by choosing $v_k$ as its next move. This is reflected by setting $\sigma(v_{k-1}) = v_k$, $\sigma(v_k) = \perp$, and by removing these vertices from $P$.\\

We show below that the above algorithm correctly computes an assignment for $(G',v_0)$. To make the proof clearer, it will be convenient to consider the implementation of the algorithm described in Algorithm \ref{algo1} below. We make the following changes:
\begin{itemize}
\item in addition to $\sigma$, we maintain a mapping $\tau$ such that whenever $\sigma(u) = v$ we have $\tau(v) = u$; this allows to test efficiently if $\sigma^{-1}(v) = \{w\}$ in Case 1.
\item in Case 1, after adding to $P$ a vertex $v$ such that $\sigma^{-1}(v) = \{w\}$ we readily add $w$ to $P$.
\end{itemize}
\begin{algorithm}[!]
\caption{$\textsc{ComputeAssignment}(G',v_0)$}
\label{algo1}
\begin{algorithmic}[1]
\STATE let $\sigma : V' \rightarrow V' \cup \{\perp\}$ such that $\sigma(u) = \perp$ for each $u \in V'$
\STATE let $\tau : V' \rightarrow V' \cup \{\perp\}$ such that $\tau(u) = \perp$ for each $u \in V'$
\STATE let $R = \{ v_0, v_1 \}$ and $P = v_0 v_1$
\WHILE{$|P| \geq 1$}
\STATE suppose that $P = v_0 v_1 \ldots v_{k}$
\STATE let $Z = \{ u \in V' - V(P) : u v_{k} \in E' \text{ and } \sigma(u) = \perp \}$
\IF{$Z = \emptyset$}
\STATE $\sigma(v_{k-1}) \leftarrow v_{k}, \sigma(v_{k}) \leftarrow \perp$
\STATE $\tau(v_{k}) \leftarrow v_{k-1}, \tau(v_{k-1}) \leftarrow \perp$
\STATE remove $v_{k},v_{k-1}$ from $P$
\ELSE
\STATE choose $z \in Z$ and let $w = \tau(z)$
\STATE append $z$ to $P$, add $z$ to $R$
\STATE if $w \neq \perp$ then append $w$ to $P$
\ENDIF
\ENDWHILE
\STATE return $(R,\sigma)$
\end{algorithmic}
\end{algorithm}

\subsection{\label{ss:correctness}Correctness}

We number the steps of the while loop by integers $0,1,\ldots$ At the end of step $s$, we denote by $P^s, R^s, \sigma^s, \tau^s$ the current values of $P, R, \sigma, \tau$. We also define $Q^s = R^s - V(P^s)$. By convention, the initial values correspond to index $s = 0$, and the first step of the loop is numbered $1$.

Consider a tuple $t = (\sigma,\tau)$ where $\sigma : V' \rightarrow V' \cup \{\perp\}$ and $\tau : V' \rightarrow V' \cup \{\perp\}$. Consider two sets $Q,R \subseteq V'$. We say that a \emph{match} of $t$ is a pair $\{u,v\} \subseteq V'$ where $\sigma(u) = v, \tau(u) = \perp, \sigma(v) = \perp$ and $\tau(v) = u$. We say that $t$ is a \emph{valid tuple} for $(Q,R)$ iff:
\begin{enumerate}
\item for $u \in V'$, if $\sigma(u) \neq \perp$ or $\tau(u) \neq \perp$ then $u \in R$;
\item we have disjoint pairs $p_1,\ldots,p_m$ such that $Q = \cup_{i = 1}^{m} p_i$ and each $p_i$ is a match of $t$.
\end{enumerate}

Proposition \ref{prop3} below states that Algorithm \ref{algo1} returns the expected result, assuming that it terminates. It relies on the following lemma which states four invariant properties of Algorithm \ref{algo1}.

\begin{lemma} \label{lem1} At the end of step $s$, we have:
\begin{itemize}
\item[(a)] $P^s = v_0 v_1 \ldots v_k$ is a path in $G'$ such that $V(P^s) \subseteq R^s$;
\item[(b)] $t^s = (\sigma^s, \tau^s)$ is a valid tuple for $(Q^s,R^s)$;
\item[(c)] for $v \in Q^s$, if $\sigma^s(v) = u \in V'$ then $u \in N_{G'}(v) \cap Q^s$ and $\sigma^s(u) = \perp$;
\item[(d)] for $v \in Q^s$, if $\sigma^s(v) = \perp$ then for every $u \in N_{G'}(v) - V(P^s)$ we have $u \in Q^s$ and $\sigma^s(u) \neq \perp$.
\end{itemize}
\end{lemma}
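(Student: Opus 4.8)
The statement is an invariant of the while loop, so the natural approach is induction on the step number $s$. The base case $s = 0$ describes the initialization: $P^0 = v_0 v_1$, $R^0 = \{v_0, v_1\}$, and $\sigma^0, \tau^0$ are identically $\perp$. Here $Q^0 = R^0 - V(P^0) = \emptyset$, so (c) and (d) are vacuous, (a) is immediate (a length-$1$ path, both vertices in $R^0$), and (b) holds with the empty collection of matches. For the inductive step, I would assume (a)--(d) hold at the end of step $s-1$ and analyze how one iteration of the loop transforms the tuple, distinguishing the two branches $Z = \emptyset$ (lines 8--10) and $Z \neq \emptyset$ (lines 12--14).

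**The $Z = \emptyset$ branch.**
Here we set $\sigma(v_{k-1}) = v_k$, $\sigma(v_k) = \perp$, $\tau(v_k) = v_{k-1}$, $\tau(v_{k-1}) = \perp$, and remove $v_{k-1}, v_k$ from $P$. So $V(P^s) = V(P^{s-1}) - \{v_{k-1}, v_k\}$ and $Q^s = Q^{s-1} \cup \{v_{k-1}, v_k\}$. The key observation is that $\{v_{k-1}, v_k\}$ is a freshly created match of $t^s$, which I must check against the definition: $\sigma^s(v_{k-1}) = v_k$, $\tau^s(v_{k-1}) = \perp$, $\sigma^s(v_k) = \perp$, $\tau^s(v_k) = v_{k-1}$. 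Adjoining this pair to the matches decomposing $Q^{s-1}$ gives (b). For (c) at the new vertex $v_{k-1}$, I need $v_k \in N_{G'}(v_{k-1}) \cap Q^s$ (true since $v_{k-1} v_k$ is an edge of the path $P^{s-1}$) and $\sigma^s(v_k) = \perp$ (just set). For (d) at the new losing vertex $v_k$, I must show every neighbor $u \in N_{G'}(v_k) - V(P^s)$ lies in $Q^s$ with $\sigma^s(u) \neq \perp$; this is exactly where $Z = \emptyset$ is used, since $Z = \emptyset$ says every neighbor of $v_k$ outside $V(P^{s-1})$ already has $\sigma \neq \perp$, and such a vertex lies in $Q^{s-1} \subseteq Q^s$ by the inductive (b). Care is needed for the neighbor $v_{k-1}$, which has just left $P$, but it is in $Q^s$ with $\sigma^s(v_{k-1}) = v_k \neq \perp$, so it is covered. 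I would also verify that modifying $\sigma, \tau$ only at $v_{k-1}, v_k$ does not disturb (c)/(d) at the old vertices of $Q^{s-1}$.

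**The $Z \neq \emptyset$ branch.**
Here we pick $z \in Z$, set $w = \tau^{s-1}(z)$, append $z$ (and $w$, if $w \neq \perp$) to $P$, and add $z$ to $R$. The maps $\sigma, \tau$ are unchanged. The subtle point is the effect on $Q$: since $z$ enters $V(P)$, and if $w \neq \perp$ then $w$ also enters $V(P)$, the pair $\{z, w\}$ (which was a match in the decomposition of $Q^{s-1}$ by the inductive (b), because $z \in Q^{s-1}$ with $\sigma^{s-1}(z) = \perp$, $\tau^{s-1}(z) = w \neq \perp$ forces $\{w, z\}$ to be that match) is removed wholesale from $Q$. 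Thus $Q^s = Q^{s-1} - \{z, w\}$, and (b) follows by dropping that one match from the collection; when $w = \perp$, the vertex $z$ was unmatched, so $z \notin Q^{s-1}$ and $Q^s = Q^{s-1}$ already. Property (a) needs that appending $z$ (and $w$) keeps $P$ a path: $z \notin V(P^{s-1})$ and $z v_k \in E'$ by definition of $Z$, and for $w$ I would argue $w \notin V(P^{s-1})$ and $z w \in E'$ from the match structure. Properties (c) and (d) restricted to the smaller set $Q^s$ follow directly from the inductive hypothesis, the only thing to check being that for the losing vertices remaining in $Q^s$, the quantifier in (d) ranges over $N_{G'}(v) - V(P^s)$, and since $V(P^s) \supseteq V(P^{s-1})$ this is a \emph{smaller} set of neighbors, so the condition only gets easier.

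**Main obstacle.**
The routine parts are (a) and the bookkeeping of (b); the delicate point is the interaction between (d) and the growth/shrinkage of $V(P)$. In the $Z = \emptyset$ case I must be sure that when $v_k$ becomes a losing vertex of $Q^s$, \emph{all} of its relevant neighbors are accounted for, and the precise reading of $Z$ (it already excludes $V(P^{s-1})$ and already excludes vertices with $\sigma \neq \perp$) is what makes (d) hold; conversely in the $Z \neq \emptyset$ case I must confirm that enlarging $V(P)$ never breaks (d) for a vertex that stays in $Q$, which it cannot, since (d) quantifies over $N_{G'}(v) - V(P^s)$ and that set only shrinks. Getting the match $\{z, w\}$ correctly identified and removed in the second branch — i.e.\ using (b) at step $s-1$ to know $w$ is exactly the partner of $z$ — is the one place where the $\tau$ machinery is essential, and I expect that to require the most careful argument.
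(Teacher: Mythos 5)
Your plan is the paper's proof: induction on $s$, the same two-branch case analysis, the same use of the match decomposition in (b) to identify $w$ as the partner of $z$, and the same invocation of $Z=\emptyset$ to establish (d). The base case, the handling of (a) and (b) in both branches, and the treatment of the new pair $\{v_{k-1},v_k\}$ all match what the paper does.

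One sub-case in the $Z=\emptyset$ branch deserves more than the sentence you give it. You propose to ``verify that modifying $\sigma,\tau$ only at $v_{k-1},v_k$ does not disturb (c)/(d) at the old vertices of $Q^{s-1}$,'' which suggests the only risk is the update of the maps; since those are updated only at $v_{k-1},v_k\notin Q^{s-1}$, that check would come out vacuous. But in this branch $V(P)$ \emph{shrinks}, so for an old losing vertex $v\in Q^{s-1}$ the quantifier range $N_{G'}(v)-V(P^s)$ in (d) \emph{grows} by $\{v_{k-1},v_k\}\cap N_{G'}(v)$. The new neighbor $v_{k-1}$ is harmless ($\sigma^s(v_{k-1})=v_k\neq\perp$), but if such a $v$ were adjacent to $v_k$, property (d) would fail outright at $v$, since $\sigma^s(v_k)=\perp$. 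The resolution is that no such $v$ exists: $v\notin V(P^{s-1})$, $\sigma^{s-1}(v)=\perp$ and $vv_k\in E'$ would put $v$ in $Z^s$, contradicting $Z^s=\emptyset$ --- the same fact you already invoke for the neighbors of $v_k$, read through the symmetry of adjacency; the paper makes this its fourth sub-case of (d). Similarly, in the $Z\neq\emptyset$ branch, (c) does not follow quite ``directly'': for $v\in Q^s$ with $\sigma^s(v)=u$ you must rule out $u\in\{z,w\}$, which takes a short argument from the disjointness of the matches in (b) (if $u=z$ then $v=w\in V(P^s)$; if $u=w$ then $\sigma^{s-1}(u)=z\neq\perp$). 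Both repairs use only ingredients already in your plan, so the structure of your argument is sound.
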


\begin{proof} We proceed by induction on $s$. For $s = 0$, we have $P^0 = v_0 v_1$, $R^0 = \{v_0,v_1\}$, $Q^0 = \emptyset$ and $\sigma^0(u) = \tau^0(u) = \perp$ for every $u \in V'$, hence the property holds. Suppose that the property holds at step $s-1$ and let us prove it at step $s$. Consider the set $Z^s$ obtained in Line 6.\\

Case 1: $Z^s = \emptyset$. We then have $R^s = R^{s-1}$, $Q^s = Q^{s-1} \cup \{ v_k,v_{k-1} \}$ and $v_{k-1},v_k \notin Q^{s-1}$. By induction hypothesis, $P^{s-1} = v_0 v_1 \ldots v_k$ is a path in $G'$ such that $V(P^{s-1}) \subseteq R^{s-1}$. The update of $P$ in Line 10 ensures that $P^s$ is also a path in $G'$ and that $V(P^s) \subseteq R^s$, showing (a).

Let us show (b). We first show Point (1). Consider $u \in V'$ such that $\sigma^{s}(u) \neq \perp$ or $\tau^{s}(u) \neq \perp$, we need to show that $u \in R^s$. This is clear if $u \in \{v_{k-1},v_k\}$, and otherwise it follows by induction hypothesis and by the fact that $R^s = R^{s-1}$.
We now show Point (2). By induction hypothesis, there exists disjoint pairs $p_1,\ldots,p_m$ such that $Q^{s-1} = \cup_{i = 1}^{m} p_i$ and each $p_i$ is a match of $t^{s-1}$. Let $p_{m+1} = \{v_{k-1},v_{k}\}$. By the definitions in Lines 8-9, we have $p_{m+1}$ match of $t^s$, and $p_{m+1}$ disjoint from the other pairs $p_i$. Hence, we have disjoint pairs $p_1,\ldots,p_{m+1}$ such that $Q^s = \cup_{i = 1}^{m+1} p_i$ and each $p_i$ is a match of $t^s$.

Let us show (c). Consider $v \in Q^s$ such that $\sigma^s(v) = u \in V'$. If $v \in Q^{s-1}$, we have $\sigma^{s-1}(v) = \sigma^{s}(v) = u$; by induction hypothesis we have $u \in N_{G'}(v) \cap Q^{s-1}$ and $\sigma^{s-1}(u) = \perp$; it follows that $u \in N_{G'}(v) \cap Q^s$, and $\sigma^s(u) = \sigma^{s-1}(u) = \perp$. If $v \notin Q^{s-1}$, we must have $v = v_{k-1}$ and $u = v_{k}$, and the result holds since $v_{k} \in N_{G'}(v_{k-1})$ (by a), $v_{k} \in Q^s$ and $\sigma^s(v_{k}) = \perp$.

Let us show (d). Consider $v \in Q^s$ such that $\sigma^s(v) = \perp$ and $u \in N_{G'}(v) - V(P^s)$. If $v \notin V(P^{s-1})$ and $u \notin V(P^{s-1})$, we have $\sigma^{s-1}(v) = \perp$, and thus $u \in Q^{s-1}$ and $\sigma^{s-1}(u) \neq \perp$ by induction hypothesis; we conclude that $u \in Q^s$ and $\sigma^s(u) \neq \perp$. If $v \in V(P^{s-1})$ and $u \in V(P^{s-1})$, we have $v = v_k, u = v_{k-1}$, hence $u \in Q^s$, and $\sigma^s(u) \neq \perp$. If $v \in V(P^{s-1})$ and $u \notin V(P^{s-1})$, we have $v = v_{k}$, and if $\sigma^{s-1}(u) = \perp$ we would have $u \in Z^s$, contradiction. If $v \notin V(P^{s-1})$ and $u \in V(P^{s-1})$, if we had $\sigma^{s-1}(u) = \perp$ we would have $u = v_k$ and $v \in Z^s$, contradiction.\\

Case 2: $Z^s \neq \emptyset$. We first make the following observations. By choice of $z$, we have $z \in N_{G'}(v_k) - V(P^{s-1})$ and $\sigma^{s-1}(z) = \perp$. If $w = \perp$, since $\sigma^{s-1}(z) = \perp$ and $\tau^{s-1}(z) = \perp$ we have $z \notin R^{s-1}$ by (b), and in this case we have $R^s = R^{s-1} \cup \{z\}$ and $Q^s = Q^{s-1}$. If $w \neq \perp$, we have $\sigma^{s-1}(w) = z$ and $z,w \in Q^{s-1}$ by (b), $z \in N_{G'}(w)$ by (c), and we have $R^s = R^{s-1}$ and $Q^s = Q^{s-1} - \{z,w\}$.

Let us show (a). By induction hypothesis, $P^{s-1} = v_0 \ldots v_{k}$ is a path in $G'$ with $V(P^{s-1}) \subseteq R^{s-1}$. We have seen above that $z \in N_{G'}(v_k) - V(P^{s-1})$ and $z \in R^s$. If $w = \perp$, we have $P^s = v_0 \ldots v_k z$, thus $P^s$ is a path in $G'$ such that $V(P^s) \subseteq R^s$. Suppose now that $w \neq \perp$. We have seen above that $w \in N_{G'}(z) - V(P^{s-1})$ (since $w \in Q^{s-1}$) and $w \in R^s$. We thus have $P^s = v_0 \ldots v_k z w$ path in $G'$ such that $V(P^s) \subseteq R^s$.

Let us show (b). For (1), observe that if $\sigma^s(u) \neq \perp$ or $\tau^s(u) \neq \perp$ then $\sigma^{s-1}(u) \neq \perp$ or $\tau^{s-1}(u) \neq \perp$, which implies by induction hypothesis that $u \in R^{s-1}$ and thus $u \in R^s$. Let us show (2). By induction hypothesis, we have disjoint pairs $p_1,\ldots,p_m$ such that $Q^{s-1} = \cup_{i = 1}^{m} p_i$ and each $p_i$ is a match of $t^{s-1}$. Observe that each $p_i$ is also a match of $t^s$. If $w = \perp$, the result follows since $Q^s = Q^{s-1}$. If $w \neq \perp$, we may assume that $p_m = \{z,w\}$, and the result follows since $Q^s = Q^{s-1} - \{z,w\} = \cup_{i = 1}^{m-1} p_i$.

Let us show (c). Consider $v \in Q^s$ such that $\sigma^s(v) = u \in V'$. We also have $v \in Q^{s-1}$ and $\sigma^{s-1}(v) = u \in V'$. By induction hypothesis, we have $u \in N_{G'}(v) \cap Q^{s-1}$ and $\sigma^{s-1}(u) = \perp$. We infer $\sigma^s(u) = \perp$ and we need to show that $u \in Q^s$. Suppose the contrary, we must have $w \neq \perp$ and $u \in \{z,w\}$. By (b), we have $\tau^{s-1}(u) = v$, $\tau^{s-1}(z) = w$ and $\sigma^{s-1}(w) = z$. If $u = z$, we would obtain $v = w$ and $v \in V(P^s)$, contradiction. If $u = w$, we would have $\sigma^{s-1}(u) = z$, contradiction.

Let us show (d). Consider $v \in Q^s$ such that $\sigma^s(v) = \perp$, and $u \in N_{G'}(v) - V(P^s)$. We also have $v \in Q^{s-1}$ and $u \notin V(P^{s-1})$. Since $\sigma^{s-1}(v) = \sigma^s(v) = \perp$, we have $u \in Q^{s-1}$ and $\sigma^{s-1}(u) \neq \perp$ by induction hypothesis. Since $u \notin V(P^s)$, we conclude that $u \in Q^s$ and $\sigma^s(u) = \sigma^{s-1}(u) \neq \perp$. \qed
\end{proof}

\begin{proposition} \label{prop3} Consider $(R,\sigma)$ as returned in Line 17. Then: $(R,\sigma)$ is an assignment for $(G',v_0)$.
\end{proposition}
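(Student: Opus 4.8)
The plan is to derive the three assignment conditions $(C_1)$--$(C_3)$ directly from the invariants of Lemma~\ref{lem1}, once the configuration at the terminal step is made explicit. Write $R,\sigma,Q,P$ for the values $R^{s^*},\sigma^{s^*},Q^{s^*},P^{s^*}$ at the last step $s^*$. Since the loop has exited we have $|P| \le 0$, and by Lemma~\ref{lem1}(a) $P$ is a path in $G'$ issued from $v_0$; hence $P$ is either the one-vertex path $v_0$ or the empty path $\epsilon$. My first step is this dichotomy, in each branch reconstructing the final Case-2 transition that produced it. If $P = v_0$, the previous configuration was $P^{s^*-1} = v_0 v_1 v_2$ and the last iteration set $\sigma(v_1) = v_2$, $\sigma(v_2) = \perp$; thus $V(P) = \{v_0\}$, $\sigma(v_0) = \perp$, and $v_1,v_2 \in Q$. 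If $P = \epsilon$, the previous configuration was $P^{s^*-1} = v_0 v_1$ and the last iteration set $\sigma(v_0) = v_1$, $\sigma(v_1) = \perp$; thus $V(P) = \emptyset$, $Q = R$, and $\sigma(v_0) = v_1 \neq \perp$. In both branches $V(P) \subseteq \{v_0\}$ and $R = Q \cup V(P)$, which is the bridge between the $Q$-indexed invariants of Lemma~\ref{lem1} and the $R$-indexed conditions of an assignment. (These two branches are exactly the $\sigma(v_0)=\perp$ and $\sigma(v_0)\neq\perp$ alternatives fed into Proposition~\ref{prop2}.)

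Next I would check the conditions in turn. For $(C_3)$, Lemma~\ref{lem1}(b) says $Q$ is partitioned into matches $\{u,v\}$ with $\sigma(u)=v$ and $\sigma(v)=\perp$, so $\sigma$ is injective on its domain (distinct ``winning halves'' have distinct partners), giving $|\sigma^{-1}(v)| \le 1$ for all $v$; moreover the algorithm only ever executes $\sigma(v_{k-1}) \leftarrow v_k$ with $k \ge 1$, and since $P$ is a path whose sole occurrence of $v_0$ is at position $0$ we always have $v_k \neq v_0$, whence $\sigma^{-1}(v_0) = \emptyset$; and $v_0 \in R$ since $R$ is never shrunk. For $(C_1)$, take $v \in R$ with $\sigma(v) \in V'$: if $v \in Q$ this is precisely Lemma~\ref{lem1}(c) together with $Q \subseteq R$, and the only other vertex, $v = v_0 \in V(P)$ in the first branch, has $\sigma(v_0)=\perp$ and so contributes nothing.

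The delicate condition is $(C_2)$, and it is also the main obstacle: Lemma~\ref{lem1}(d) only controls neighbors \emph{outside} the current path, so the single edge $v_0 v_1$ is not covered by the invariant and must be argued by hand via the terminal transition. Concretely, for $v \in R$ with $\sigma(v)=\perp$ and $v \in Q$, part (d) settles every $u \in N_{G'}(v) - V(P)$; as $V(P) \subseteq \{v_0\}$ and $N_{G'}(v_0) = \{v_1\}$, the only neighbor possibly missed is $u = v_0$, which forces $v = v_1$ in the first branch --- but there $\sigma(v_1) = v_2 \neq \perp$, so such a $v$ never triggers $(C_2)$. Finally, for $v \in V(P)$, i.e.\ $v = v_0$ in the first branch, its unique neighbor $v_1$ lies in $R$ with $\sigma(v_1) = v_2 \neq \perp$. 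This is exactly where the two sub-cases are indispensable: a naive reduction from the $Q$-invariants to the $R$-conditions would fail precisely on the boundary pair $v_0,v_1$, and the terminal-transition bookkeeping is what rescues it.
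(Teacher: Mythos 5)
Your proof is correct and follows essentially the same route as the paper's: identify the terminal dichotomy ($P=\epsilon$ or $P=v_0$) and read off $(C_1)$--$(C_3)$ from the invariants of Lemma~\ref{lem1} at the last step, using point (b) for $(C_3)$. You are in fact a bit more careful than the paper on $(C_2)$, where Lemma~\ref{lem1}(d) only controls neighbours outside $V(P)$ and the boundary pair $v_0,v_1$ must be handled by the terminal transition, as you do.
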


\begin{proof} Let $s$ be the last step of the algorithm. At this step, we must have $Z^s = \emptyset$, and we have either (a) $P^s = \epsilon$ and $Q^s = R^s$, or (b) $P^s = v_0$, and $Q^s = R^s - \{v_0\}$. Clearly, we have $R \subseteq V'$ and $v_0 \in R'$. We need to show that Conditions $(C_1)$-$(C_2)$-$(C_3)$ hold for $(R,\sigma)$. Condition $(C_1)$ follows by Point (c) of Lemma \ref{lem1} applied at step $s$. Condition $(C_2)$ follows by Point (d) of Lemma \ref{lem1} applied at step $s$. Let us show condition $(C_3)$. Consider $v \in R$. If $v = v_0$, the instructions in Lines 8-10 ensure that $\sigma^{-1}(v) = \emptyset$. Suppose now that $v \neq v_0$, we then have $v \in Q^s$. By Point (b) of Lemma \ref{lem1}, we have $\sigma^{-1}(v) = \emptyset$ if $\tau^s(v) = \perp$, and $\sigma^{-1}(v) = \{u\}$ if $\tau^s(v) = u \in V'$. Thus, we have $|\sigma^{-1}(v)| \leq 1$. \qed
\end{proof}

\subsection{\label{ss:running-time}Running time}

We consider an execution of the algorithm on a graph $G'$ with $n$ vertices, and we let $S$ denote its set of steps. Here, $S$ is an initial interval of $\N$, possibly infinite. We will define a set $S_v \subseteq S$ for each $v \in V'$, and we will first give an upper bound on $|S_v|$, which will then yield an upper bound on $|S|$.

Consider a pair $p = (x,y)$ with $x,y \in V' \cup \{\perp\}$. We say that $p$ \emph{contains} $v$ if $v \in \{x,y\}$, and we let $\tilde{p} = (y,x)$. We say that a step $s$ of the algorithm is: (a) a \emph{deletion} of $p$ if step $s$ executes Lines 8-10 with $v_{k-1} = x$ and $v_k = y$; (b) an \emph{introduction} of $p$ if step $s$ executes Lines 12-14 with $z = x$ and $w = y$. 

We make the following observations. If $s$ is the deletion of $p$, then $|P^{s}| = |P^{s-1}| - 2$. If $s$ is the introduction of a pair $(x,\perp)$, then $|P^{s}| = |P^{s-1}| + 1$. If $s$ is the introduction of a pair $(x,y)$ with $y \neq \perp$, then $|P^{s}| = |P^{s-1}| + 2$. Moreover, if $s$ is the introduction of a pair $(x,\perp)$, then $x \notin R^{s-1}$ and $R^{s} = R^{s-1} \cup \{x\}$. 

Proposition \ref{prop4} gives lower and upper bounds on the number of iterations of the algorithm. We first need the following Lemma.
Given $v \in V'$, we let $S_v$ be the set of steps $s \in S$ such that $s$ is a deletion/introduction of a pair $p$ containing $v$.

\begin{lemma} \label{lem2} For every $v \in V'$, we have $|S_v| \leq 2n$.
\end{lemma}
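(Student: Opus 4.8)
We must bound, for a fixed vertex $v$, the number of steps that delete or introduce a pair containing $v$. Intuitively, $v$ can only be "touched" a limited number of times because each touch changes its status (in $R$, in $P$, matched, losing) in a controlled way, and the bipartiteness of $G'$ prevents an unbounded oscillation. The plan is to track, over the course of the algorithm, the sequence of events in $S_v$ and show that between any two consecutive introductions of $v$ there must be an intervening deletion of $v$, and then bound the number of deletions of $v$ directly.

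Let me think carefully about what happens to a vertex $v$.

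**Step 1: Classify the events in $S_v$.** First I would observe that every step in $S_v$ is exactly one of: an introduction $(v,\perp)$, an introduction $(v,y)$ with $y\neq\perp$, an introduction $(x,v)$ with $x\neq\perp$, a deletion $(v,y)$ (i.e.\ $v_{k-1}=v$), or a deletion $(x,v)$ (i.e.\ $v_k=v$). The key structural fact to establish is the relationship between membership in $R$, membership in $V(P)$, and the value of $\sigma(v)$. From the observations already recorded in the running-time subsection, an introduction of $(v,\perp)$ is the unique kind of event that adds $v$ to $R$, and it requires $v\notin R^{s-1}$; hence this can happen at most once for each $v$. After that single introduction, $v\in R$ forever (the algorithm never removes vertices from $R$). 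So $|S_v|$ splits as: at most one introduction of type $(v,\perp)$, plus the events that move $v$ in and out of $P$ while $v$ is already in $R$.

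**Step 2: Pair up the in/out-of-$P$ events.** Next I would argue that, once $v\in R$, the events in $S_v$ alternate between introductions (which append $v$ to $P$, so move $v$ from $Q$ into $V(P)$) and deletions (which remove $v$ from $P$, moving it into $Q$). Using Lemma~\ref{lem1}(b), $v$ is in $V(P)$ or in $Q$ but the two are disjoint, and each introduction-of-$v$ event requires $v\notin V(P)$ beforehand while each deletion-of-$v$ event requires $v\in V(P)$ beforehand. This forces the events in $S_v$ to strictly alternate introduction, deletion, introduction, deletion, $\ldots$, so $|S_v|$ is at most twice the number of deletions of $v$ (up to an additive constant for the initial state and the one $(v,\perp)$ introduction).

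**Step 3: Bound the number of deletions of $v$ using bipartiteness.** This is where I expect the real obstacle, and where the hypothesis that $G'$ is bipartite must be used. The idea is that each time $v$ is re-introduced into $P$ after having been deleted, it enters at a position of the opposite parity (or is pulled back in via a match through $\tau$), and bipartiteness bounds how the position of $v$ in $P$ can behave: because $G'$ is bipartite, a vertex $v$ can only reappear in $P$ at positions of one fixed parity class, and the set of distinct potential neighbors/partners through which $v$ can be re-dragged into $P$ is bounded by the degree of $v$, hence by $n$. I would aim to show that each deletion of $v$ can be charged to a distinct vertex of $N_{G'}(v)$ (or to a distinct position-parity witness), giving at most $n$ deletions of $v$, and therefore $|S_v|\le 2n$ after accounting for the alternation from Step~2. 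The delicate point — and the crux of the whole running-time argument — is to pin down exactly why a vertex cannot be deleted and re-introduced more than a bounded number of times via the same neighbor; this is precisely the nontriviality of polynomial termination flagged in the introduction, and bipartiteness is what rules out the short odd cycles that would otherwise permit unbounded oscillation.
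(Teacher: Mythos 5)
Your Steps 1--2 match the paper's setup: the events of $S_v$ do alternate, beginning with the unique introduction of $(v,\perp)$, and the paper proves by induction the sharper statement that $s_{2i-1}$ is the deletion of a pair $p$ containing $v$ and $s_{2i}$ is the introduction of the reversed pair $\tilde{p}$. The gap is in Step 3. Charging the deletions of $v$ to distinct neighbors in $N_{G'}(v)$ cannot work: after the deletion of $(v,u)$ the algorithm may later re-introduce $(u,v)$, find $Z=\emptyset$ again and delete $(u,v)$, then re-introduce $(v,u)$, and so on, so the \emph{same} partner $u$ can account for arbitrarily many consecutive events of $S_v$. Nothing local to $v$ (its degree, its set of partners, or the parity class of its positions) bounds the number of rounds, and you acknowledge that this crux is unresolved.

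The paper's resolution is a \emph{global} charge. Since $G'$ is bipartite and $P$ always starts at $v_0$, the parity of the position of $v$ (equivalently, of its partner) in $P$ is fixed; consequently, if $p$ is deleted at step $s=s_{2i-1}$ and $\tilde{p}$ is introduced at step $s'=s_{2i}$, then $|P^{s'}|-|P^{s}|$ is odd. Now, deletions change $|P|$ by $-2$ and introductions of full pairs by $+2$; the only step with an odd effect on $|P|$ is the introduction of a singleton $(x,\perp)$, and such a step can occur at most once per vertex $x$ over the whole execution, because it requires $x\notin R$ and $R$ never shrinks. Hence each open interval $(s_{2i-1},s_{2i})$ contains the one-and-only introduction of some vertex $x\neq v$ into $R$; these intervals are disjoint, so the witnesses are distinct, there are at most $n-1$ rounds, and $|S_v|\le 2(n-1)+2=2n$. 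You correctly sensed that bipartiteness pins the parity of $v$'s position, but the consequence needed is about the parity of the change in the \emph{length} of $P$ between a deletion and the matching re-introduction, charged against fresh vertices entering $R$ anywhere in the graph rather than against neighbors of $v$.
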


\begin{proof} Suppose that $S_v$ contains elements $s_0 < s_1 < \ldots$ Let $I$ denote the set of indices $i \in \N$ such that $0 < 2i \leq |S_v|$. Observe that at the beginning of $s_0$ we have $v \notin V(P)$ and $\tau(v) = \perp$. Thus, $s_0$ is an introduction of $(v,\perp)$. We show by induction on $i \in I$ that there is a pair $p$ containing $v$ such that: (a) step $s_{2i-1}$ is the deletion of $p$, (b) step $s_{2i}$  is the introduction of $\tilde{p}$.

Suppose that the property is true for $i-1$, and let us prove it for $i$. Let $s = s_{2i-1}$ and $s' = s_{2i}$; since $i \in I$, these steps are defined. By induction hypothesis, $s_{2i-2}$ is an introduction of a pair containing $v$. Thus, at the beginning of step $s$ we have $v \in V(P)$, and $s$ is the deletion of a pair $p$ containing $v$. We suppose that $p = (v,u)$, since the case $p = (u,v)$ is symmetric. After step $s$ and until step $s'$, we have $v,u \notin V(P)$, $\sigma(v) = u$ and $\tau(u) = v$. Thus, step $s'$ is an introduction step, and it must introduce the pair $\tilde{p} = (u,v)$.

We define a mapping $\Phi : I \rightarrow V' - \{v\}$ as follows. Consider $i \in I$, and let $s = s_{2i-1}$ and $s' = s_{2i}$. The induction hypothesis applied for $i$ implies that $| P^{s'} | - | P^s |$ is odd. It follows that there is a step $t$ ($s < t < s'$) where $| P^t | - | P^{t-1} |$ is odd, and thus there is some vertex $v_i \in V' - \{v\}$ such that step $t$ is the introduction of $(v_i,\perp)$. We set $\Phi(i) = v_i$. By the above observation, $\Phi$ is injective, and thus $|I| \leq n-1$. We conclude that $|S_v| \leq 2|I| + 2 \leq 2n$.
\qed
\end{proof}

\begin{proposition} \label{prop4} Suppose that $G'$ has $n$ vertices. Algorithm \ref{algo1} executed on $(G',v_0)$ performs $O(n^2)$ iterations. Furthermore, this bound is tight.
\end{proposition}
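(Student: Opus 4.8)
The plan is to obtain the $O(n^2)$ upper bound as an immediate corollary of Lemma \ref{lem2}, and then to prove tightness by exhibiting a family of instances on which the algorithm performs $\Omega(n^2)$ iterations. For the upper bound, first observe that every iteration of the while loop is, by construction, either a deletion (Lines 8--10) or an introduction (Lines 12--14) of a single pair $p$, and that this $p$ always contains at least one vertex of $V'$ (exactly two for a deletion, and one or two for an introduction). Hence each step $s \in S$ lies in $S_v$ for at least one $v \in V'$, so that $|S| \le \sum_{v \in V'} |S_v|$. Combining this with the bound $|S_v| \le 2n$ of Lemma \ref{lem2} and $|V'| = n$ gives $|S| \le 2n^2 = O(n^2)$, which is the claimed bound on the number of iterations.

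For tightness, I would first reduce the lower bound to a statement about the dynamics of $|P|$ that is much easier to check than a complete trace. Since each step changes $|P|$ by at most $2$, the number of iterations is at least half the total variation $\sum_{s \ge 1} \big| |P^s| - |P^{s-1}| \big|$ of the path length. It therefore suffices to construct a family in which $|P|$ oscillates $\Omega(n)$ times between $O(1)$ and $\Omega(n)$: each such oscillation contributes $\Omega(n)$ to the total variation, so $\Omega(n)$ of them force $\Omega(n^2)$ iterations. The mechanism driving these oscillations is re-introduction: a pair $(z,w)$ with $w \neq \perp$ is re-entered into $P$ only when $z$ has been freed ($\sigma(z) = \perp$) by an earlier deletion of its partner $w = \tau(z)$; so, following the accounting behind Lemma \ref{lem2}, the target is a family in which $\Omega(n)$ matched pairs persist in $Q^s$ (i.e.\ $|Q^s| = \Omega(n)$) across each of $\Omega(n)$ successive introductions, so that a linear number of vertices are each re-introduced a linear number of times and Lemma \ref{lem2} is simultaneously near-tight for them.

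The concrete family I would use is a \emph{staircase} bipartite graph in which a set of $\Theta(m)$ ``trigger'' vertices adjacent to $v_1$ shares a common ladder of $\Theta(m)$ vertices, with adjacencies arranged so that resolving the $i$-th trigger forces the path to sweep and re-introduce the whole ladder once more before returning to the position $v_0 v_1$; taking $n = \Theta(m)$, the $m$ sweeps of length $\Theta(m)$ would then yield $\Theta(m^2) = \Theta(n^2)$ iterations. The hard part will be pinning down the exact adjacencies, together with the adversarial tie-breaking in the choice of $z \in Z$, so that (i) each trigger becomes winning only after a \emph{full} ladder sweep, and (ii) the ladder is reset to a re-introducible state after each sweep rather than being permanently matched after the first pass. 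The traces I have examined show that the naive candidates (a plain path, a complete bipartite graph, or a ``half-graph'') stabilise after a single sweep and give only a linear bound, because the deletions during a cascade tend to make the revisited vertices \emph{winning} and hence inert; the delicate design point is therefore to leave the ladder endpoints \emph{losing} (and with their $\tau$-partners poised for re-introduction) at the end of each sweep, which is precisely the regime preventing premature stabilisation. Once such a family is fixed, the oscillation count can be read off from its structure and fed into the total-variation bound above to conclude $\Omega(n^2)$.
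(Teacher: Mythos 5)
Your upper bound argument is exactly the paper's: every step other than step $0$ is a deletion or an introduction of a pair containing some $v \in V'$, so $S \subseteq \{0\} \cup \bigcup_{v \in V'} S_v$, and Lemma \ref{lem2} gives $|S| \leq 2n^2 + 1$. That half is complete and correct (the paper keeps the $+1$ for the initialization step, a detail your $|S| \le \sum_v |S_v|$ inequality glosses over, but this is immaterial to the $O(n^2)$ claim).

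The tightness half, however, is a genuine gap: you reduce the problem to exhibiting a family on which $|P|$ oscillates $\Omega(n)$ times with amplitude $\Omega(n)$ (a sound reduction, since each iteration changes $|P|$ by at most $2$), but you never actually produce the family --- you explicitly defer ``pinning down the exact adjacencies'' and the tie-breaking, which is precisely the content of the claim. The paper does supply a concrete witness: $G'_n$ with $V' = \{v_0,v_1,\ldots,v_n,w_1,\ldots,w_n\}$, $N_{G'_n}(v_0) = \{v_1\}$ and $N_{G'_n}(v_i) = \{w_1,\ldots,w_{n-i+1}\}$ for $i \in [n]$, on which the algorithm performs $n^2+1$ iterations. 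Note that this is exactly a staircase/half-graph, one of the candidates you report as stabilising after a single sweep with only linearly many iterations; this discrepancy suggests your traces used a different orientation of the staircase or a different choice rule for $z \in Z$, and you should re-examine them --- the re-introduction mechanism you correctly identify (pairs parked in $Q^s$ whose $z$-endpoint is freed and re-entered via Line 14) is indeed what drives the quadratic behaviour on this instance. As it stands, your argument establishes $O(n^2)$ but not the matching lower bound.
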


\begin{proof} We first show that the algorithm performs $O(n^2)$ iterations. With the above definitions, the set of steps is $S = \{0\} \cup \bigcup_{v \in V'} S_v$, and it follows that $|S| \leq 1 + \sum_{v \in V'} |S_v|$. By Lemma \ref{lem2}, each term $|S_v|$ is upper bounded by $2n$, and thus $|S| \leq 2n^2 + 1$. We conclude that the number of iterations is $O(n^2)$.

We now show the tightness of the bound. Fix $n \in \N$ and consider the graph $G'_n$ defined as follows: (a) $G'_n$ has vertex set $V' = \{ v_0,v_1,\ldots,v_n,w_1,\ldots,w_n \}$, (b) $N_{G'}(v_0) = \{v_1\}$ and for each $i \in [n]$ we have $N_{G'_n}(v_i) = \{ w_1, \ldots, w_{n-i+1} \}$. Clearly, $G'_n$ has $2n+1$ vertices. It can be shown that Algorithm \ref{algo1} executed on $(G'_n,v_0)$ performs $n^2 + 1$ iterations, hence the result. \qed
\end{proof}

From Proposition \ref{prop4}, we infer that Algorithm \ref{algo1} can be implemented in $O(n^3)$ time. Indeed, assume that $G'$ is represented by an adjacency matrix indexed by $n$ vertices, and that $\sigma,\tau$, $R$ and $P$ are represented by arrays of length $n$; with this representation, each iteration of the while loop takes $O(n)$ time, and thus the total running time is at most $O(n^3)$. Together with Proposition \ref{prop3}, this completes the proof of Theorem \ref{thm2}.

\section{\label{s:hypergraph-version}An extension to balanced hypergraphs}

\subsection{\label{ss:hypergraph-defs}Preliminaries}

It will be convenient for us to represent a hypergraph by its incidence graph. Thus, we define a \emph{hypergraph} as a bipartite graph $H = (W,F)$ with bipartition $W = V \cup E$. To avoid confusion, an element of $V$ will be called a \emph{hypervertex}, and an element of $E$ will be called a \emph{hyperedge}. Given $u \in V$ and $e \in E$, we say that $u$ is \emph{incident} to $e$ if $ue \in F$. 
Aside from this, the definitions introduced for graphs in Section \ref{ss:graph-defs} carry over to hypergraphs. 

We recall that a \emph{path} in $H$ is a walk $P = z_1 z_2 \ldots z_k$ such that: for $\{i,j\} \subseteq [k]$ we have $z_i \neq z_j$; note that $P$ must alternate between $V$ and $E$. 
A \emph{cycle} in $H$ is a walk $C = z_1 z_2 \ldots z_k$ such that: (a) $z_1 = z_k$ and (b) for $\{i,j\} \subseteq [k]$ we have $z_i \neq z_j$ unless $\{i,j\} = \{1,k\}$. 
If $P = z_1 z_2 \ldots z_k$ is a path or a cycle in $H$, we say that $P$ is \emph{strong} iff $H[V(P)]$ contains exactly the edges $\{z_i,z_{i+1}\}$ ($1 \leq i < k$).

Fix $S,T \subseteq  W$. We say that $S$ \emph{covers} $T$ (in $H$) iff for every $u \in T$, we have $|N_H(u) \cap S| \geq 1$. We say that $S$ \emph{splits} $T$ (in $H$) iff for every $u \in T$, we have $|N_H(u) \cap S| \leq 1$. A \emph{matching} of $H$ is a set $M \subseteq E$ such that $M$ splits $V$ in $H$. An \emph{independent} of $H$ is a set $S \subseteq V$ such that $S$ splits $E$ in $H$. A \emph{transversal} of $H$ is a set $T \subseteq V$ such that $T$ covers $E$ in $H$. We let $\nu(H)$ denote the maximum cardinality of a matching of $H$, and we let $\tau(H)$ denote the minimum cardinality of a transversal of $H$. 

We say that $H$ is \emph{balanced} iff it has no strong cycle of length $4k+2$ for an integer $k \geq 1$ \cite{B72}. 
We have the following characterization of balanced hypergraphs due to \cite{BLV70} (see also \cite{L72,ST16} for alternative combinatorial proofs).

\begin{theorem} \label{thm3} \cite{BLV70} A hypergraph $H$ is balanced iff for every $H'$ partial subhypergraph of $H$, it holds that $\nu(H') = \tau(H')$.
\end{theorem}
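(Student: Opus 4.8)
The plan is to prove the two implications separately, using throughout the trivial weak-duality bound $\nu(H') \le \tau(H')$, which holds for \emph{every} hypergraph $H'$: the hyperedges of a matching are pairwise disjoint (as a matching splits $V$), so a transversal must spend a distinct hypervertex to cover each of them. Consequently, in each direction the real content is the reverse inequality $\tau \le \nu$ on the relevant (sub)hypergraph.

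First I would dispatch the backward implication by contraposition. Suppose $H$ is not balanced, so it contains a strong cycle $C$ of length $4k+2$ with $k \ge 1$; as a closed alternating walk in the incidence graph it passes through exactly $2k+1$ hypervertices and $2k+1$ hyperedges. Let $H^* = H[V(C)]$ be the partial subhypergraph induced on these hypervertices and hyperedges. Because $C$ is strong, within $H^*$ each of the $2k+1$ hyperedges is incident to precisely its two cyclically consecutive hypervertices, so the incidence structure of $H^*$ is exactly that of the odd graph cycle on $2k+1$ vertices. A matching of $H^*$ is then a set of pairwise non-adjacent edges of this odd cycle, of size at most $k$, while a transversal is a vertex cover of it, of size at least $k+1$; hence $\nu(H^*) = k < k+1 = \tau(H^*)$. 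This exhibits a partial subhypergraph failing the K\"onig property, which is the desired contrapositive.

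For the forward implication I would first observe that balancedness is hereditary: any strong cycle of a partial subhypergraph $H'$ of $H$ is already a strong cycle of $H$, so every such $H'$ is itself balanced. It therefore suffices to prove the min-max equality $\nu(H) = \tau(H)$ for an arbitrary balanced $H$, and then apply this to each $H'$. The nontrivial content is the inequality $\tau(H) \le \nu(H)$. Here I would exploit the game-theoretic machinery developed in this section: from an \emph{assignment} for the hypergraph game associated with $H$ one reads off, exactly as in Proposition \ref{prop2}, either a matching or a transversal, and a careful bookkeeping of the winning/losing decomposition of $V \cup E$ yields a matching and a transversal of equal cardinality, certifying $\nu(H) = \tau(H)$. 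The needed input is the (non-constructive) existence of such an assignment for a balanced hypergraph, which is established in this section.

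The main obstacle is precisely this forward inequality, i.e. guaranteeing that a consistent assignment exists for a balanced hypergraph. As in the graph case, the difficulty is that the natural strategy-building process may loop; the role played there by bipartiteness (compare the termination analysis behind Theorem \ref{thm2}) is played here by balancedness, the absence of a strong cycle of length $4k+2$ being exactly the hypothesis that forbids the offending closed configuration. Translating ``no bad strong cycle'' into the well-definedness of the strategy is the delicate step. Should the assignment route prove awkward, a fallback is a direct induction on $|V| + |E|$: a singleton hyperedge forces its hypervertex into every transversal and is handled immediately, whereas the case in which all hyperedges have size at least two is where balancedness must be invoked to rule out an odd obstruction.
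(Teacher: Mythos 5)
The paper does not prove this statement at all: Theorem~\ref{thm3} is quoted as an external result of \cite{BLV70} (with \cite{L72,ST16} cited for alternative proofs), and it is used as a black box later, most notably inside the proof of Proposition~\ref{prop6}. So there is no ``paper proof'' to match; what matters is whether your argument stands on its own. Your backward implication does: a strong cycle of length $4k+2$ passes through $2k+1$ hypervertices and $2k+1$ hyperedges, strongness forces the induced partial subhypergraph to be exactly the graph cycle $C_{2k+1}$, and $\nu(C_{2k+1})=k<k+1=\tau(C_{2k+1})$. That direction is correct and is the easy half of the theorem.

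The forward implication, which is the entire substance of the Berge--Las~Vergnas theorem, has a genuine gap, and in the context of this paper the gap is a circularity. You propose to derive $\tau(H)\le\nu(H)$ for balanced $H$ from the existence of an assignment for $(H',v_0)$, ``established in this section.'' But the only existence result for assignments in the hypergraph setting is Proposition~\ref{prop6}, whose proof begins by invoking Theorem~\ref{thm3} to get a maximum matching $M$ and minimum transversal $T$ with $|M|=|T|$ and then builds $\sigma$ from the complementary slackness of that pair. So the assignment route presupposes exactly the min--max equality you are trying to prove; the paper explicitly flags that it has no independent (combinatorial) construction of an assignment, which is its Question~\ref{q1}. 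Your fallback --- induction on $|V|+|E|$, handling singleton hyperedges and then ``invoking balancedness'' in the remaining case --- names the shape of a proof but supplies none of its content: the inductive step where every hyperedge has size at least two is precisely where all known proofs (LP rounding in \cite{BLV70}, or the combinatorial deletion arguments of \cite{L72,ST16}) do real work, e.g.\ finding a vertex $v$ with $\tau(H\setminus v)=\tau(H)-1$ and showing a maximum matching of $H\setminus v$ extends. Without that lemma, or an independent proof that balanced hypergraphs admit assignments, the forward direction is not established.
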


\subsection{The game $\G(H',v_0)$}

Let $H$ be a balanced hypergraph with bipartition $W = V \cup E$, and let $U$ be an independent transversal of $H$. We adapt to the setting of hypergraphs the game $\G(G',v_0)$ and the corresponding notion of assignment seen in Section \ref{s:graph-version}.

We augment $H$ to a hypergraph $H'$ by adding (1) two hypervertices $v_0,v_1$, (2) a hyperedge $e_0$ with $N_{H'}(e_0) = \{v_0,v_1\}$, (3) for each $v \in U$ a hyperedge $f_v$ with $N_{H'}(f_v) = \{v_1,v\}$. 
We consider the following two-player game $\G(H',v_0)$.  A play of the game is a sequence $P = v_0 e_0 v_1 e_1 \ldots v_k e_k$ that is a strong path in $H'$. The game starts with $P = v_0 e_0$. Suppose that it is the turn of Player $s \in \{1,2\}$ and that we have $P = v_0 e_0 v_1 e_1 \ldots v_k e_k$ with $k-1 \equiv s~(2)$. If possible, the player (1) chooses $v \in V', e \in E'$ such that $P v e$ is a strong path in $H'$, (2) appends $v_{k+1} = v$ and $e_{k+1} = e$ to $P$ and (3) hands over to the other player.

Consider a function $\sigma : V' \rightarrow E' \cup \{\perp\}$. Given $u \in V'$, we let $N_{\sigma}(u) = \{ v \in V' - u : \sigma(v) \text{ is a hyperedge incident to } u \text{ in } H' \}$. We say that $(R,\sigma)$ is an \emph{assignment} for $(H',v_0)$ iff $R \subseteq V'$, $v_0 \in R$, and $\sigma : V' \rightarrow E' \cup \{\perp\}$ are such that:
\begin{itemize}
\item[($C_1$)] for $v \in R$, if $\sigma(v) = e \in E'$ then $e \in N_{H'}(v)$ and for every $u \in N_{H'}(e) - \{v\}$ we have $u \in R$ and $\sigma(u) = \perp$;
\item[($C_2$)] for $v \in R$, if $\sigma(v) = \perp$ then for every $e \in N_{H'}(v)$, there exists $u \in N_{H'}(e)$ such that $u \in R$ and $\sigma(u) \neq \perp$;
\item[($C_3$)] for $v \in R$ we have $|N_{\sigma}(v)| \leq 1$, and $N_{\sigma}(v_0) = \emptyset$.
\end{itemize}
Fix $(R,\sigma)$ assignment for $(H',v_0)$ such that $\sigma(v_0) = \perp$, and let $P = v_0 e_0 \ldots v_k e_k$ be a path in $H'$. We say that $P$ is \emph{compatible with $(R,\sigma)$} iff (a) for an integer $i$ ($0 \leq i \leq k$) odd, we have $v_i \in R$, $\sigma(v_i) \neq \perp$ and $e_i = \sigma(v_i)$, (b) for an integer $i$ ($0 \leq i \leq k$) even, we have $v_i \in R$, $\sigma(v_i) = \perp$, (c) $P$ is a strong path in $H'$.

Proposition \ref{prop5} below provides a winning strategy for either player, given an assignment. Its proof relies on the following lemma.

\begin{lemma} \label{lem3} Suppose that we have $(R,\sigma)$ assignment for $(H',v_0)$, and suppose that we have a path $P = v_0 e_0 \ldots v_k e_k$ compatible with $(R,\sigma)$, and $k$ even. Consider $v_{k+1} \in N_H(e_k) - \{v_k\}$ such that $\sigma(v_{k+1}) \neq \perp$, and let $e_{k+1} = \sigma(v_{k+1})$. Then $P' = P v_{k+1} e_{k+1}$ is a strong path in $H'$.
\end{lemma}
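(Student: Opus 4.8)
The plan is to verify the three defining requirements of a strong path for $P' = P\, v_{k+1}\, e_{k+1}$ (where I set $e_{k+1} = \sigma(v_{k+1})$): that it is a walk alternating between $V'$ and $E'$, that its vertices are pairwise distinct, and that $H'[V(P')]$ contains no edge other than the consecutive ones. I first record that $v_{k+1} \in R$ (an assignment takes a non-$\perp$ value only on vertices of $R$), so $(C_1)$ applies to $v_{k+1}$; this gives both $e_{k+1} \in N_{H'}(v_{k+1})$ and the key consequence that every $u \in N_{H'}(e_{k+1}) - \{v_{k+1}\}$ satisfies $u \in R$ and $\sigma(u) = \perp$. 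Combined with the hypothesis $v_{k+1} \in N_H(e_k) \subseteq N_{H'}(e_k)$, this already shows that $P'$ is a walk alternating between $V'$ and $E'$.

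Next I would show that $v_{k+1}$ and $e_{k+1}$ are new and create no chord. For $v_{k+1}$: it differs from $v_k$ by hypothesis, and if $v_{k+1} = v_i$ with $i < k$ then $v_i$ would be adjacent to $e_k$, a chord of the strong path $P$; so $v_{k+1} \notin V(P)$. For the hyperedge $e_{k+1}$ I rule out every adjacency $e_{k+1} \in N_{H'}(v_i)$ with $i \le k$, which also yields $e_{k+1} \notin V(P)$ (if $e_{k+1} = e_j$ then $v_j$ would be such a neighbour). I split on the parity of $i$: if $i$ is odd then $\sigma(v_i) = e_i \neq \perp$ contradicts the $(C_1)$-consequence above (since $v_i \in N_{H'}(e_{k+1}) - \{v_{k+1}\}$); if $i = 0$ then $v_{k+1} \in N_\sigma(v_0)$ contradicts $N_\sigma(v_0) = \emptyset$; and if $i \ge 2$ is even then both $v_{i-1}$ (as $\sigma(v_{i-1}) = e_{i-1}$ is incident to $v_i$) and $v_{k+1}$ lie in $N_\sigma(v_i)$, contradicting $|N_\sigma(v_i)| \le 1$ from $(C_3)$.

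The remaining task, excluding the chords $v_{k+1} \in N_{H'}(e_j)$ for $j \le k-1$, is the main obstacle, and it is precisely where the hypothesis that $H$ is balanced is used. I would argue by contradiction: let $j$ be the largest index below $k$ with $v_{k+1} \in N_{H'}(e_j)$. If $j$ is odd, then $\sigma(v_j) = e_j$ and $(C_1)$ applied to $v_j$ forces $\sigma(v_{k+1}) = \perp$, a contradiction; hence $j$ is even. Moreover $j \neq 0$, since $N_{H'}(e_0) = \{v_0,v_1\} \subseteq V(P)$ would place $v_{k+1}$ in $V(P)$. Now $C = v_{k+1}\, e_j\, v_{j+1}\, e_{j+1} \ldots v_k\, e_k\, v_{k+1}$ is a cycle whose internal vertices form a contiguous subpath of the strong path $P$; by the maximality of $j$ and $v_{k+1} \notin V(P)$, the only neighbours of $v_{k+1}$ inside $V(C)$ are $e_j$ and $e_k$, so $C$ is strong. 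Since $j \ge 2$, the cycle meets neither $v_0, v_1$ nor any added hyperedge (all of which touch only $v_0, v_1$), hence lies in $H$; and its length is $2(k-j)+2 = 4m+2$ with $m = (k-j)/2 \ge 1$, contradicting the absence of strong cycles of length $4m+2$ in a balanced hypergraph.

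Finally I would assemble these facts: $v_{k+1}$ and $e_{k+1}$ are new and distinct, consecutive vertices of $P'$ are adjacent, and the only possible new edges of $H'[V(P')]$ have the forms $v_{k+1}\,e_j$ or $e_{k+1}\,v_i$, all excluded except the two genuine path edges $e_k\,v_{k+1}$ and $v_{k+1}\,e_{k+1}$; therefore $P'$ is a strong path in $H'$. I expect the delicate point to be the strong-cycle step: one must choose $j$ maximal so that the resulting cycle is genuinely strong, and check that it is wholly contained in $H$ so that balancedness can be invoked.
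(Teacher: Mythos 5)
Your proposal is correct and follows essentially the same route as the paper's proof: distinctness of $v_{k+1}$ via the strong-path property of $P$, exclusion of chords $e_{k+1}v_i$ via $(C_1)$ for odd $i$ and $(C_3)$ for even $i$, and exclusion of chords $v_{k+1}e_j$ via $(C_1)$ for odd $j$ and a strong cycle of length $4m+2$ contradicting balancedness for even $j$. The only cosmetic differences are that you derive $e_{k+1}\notin\{e_0,\ldots,e_k\}$ as a corollary of the chord exclusion rather than by a separate argument, and you spell out the boundary cases ($i=0$, $j=0$, and the cycle lying in $H$) a little more explicitly than the paper does.
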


\begin{proof} Observe that $v_{k+1}$ is distinct from the other vertices $v_i$: we have $v_{k+1} \neq v_k$ by choice of $v_{k+1}$; for $i < k$, we have $v_{k+1} \neq v_i$ since the hyperedge $e_k$ is not incident to $v_i$ ($P$ strong path). Observe also that $e_{k+1}$ is distinct from the other hyperedges $e_i$: if we had $e_{k+1} = e_i$, we would have $j \in \{i-1,i\}$ such that $j$ odd and $e_{k+1}$ incident to $v_{j+1}$; since $\sigma(v_{j+1}) = \perp$ and $v_{j+1}$ is incident to $e_j = \sigma(v_j)$ and $e_{k+1} = \sigma(v_{k+1})$, we would obtain $v_j, v_{k+1} \in N_{\sigma}(v_{j+1})$, impossible by $(C_3)$.

Since $P$ is a strong path in $H'$, it remains to show that: (a) $v_{k+1}$ is not incident to a hyperedge $e_i$ ($i < k$); (b) $e_{k+1}$ is not incident to a hypervertex $v_i$ ($i \leq k$). 

Point (a): suppose by way of contradiction that $v_{k+1}$ is incident to a hyperedge $e_i$ ($i < k$), and let us choose such an $i$ maximum. If $i$ is odd, we have $\sigma(v_i) = e_i$, $v_{k+1} \in N_{H'}(e_i) - \{v_i\}$ and $\sigma(v_{k+1}) \neq \perp$, which contradicts $(C_1)$. If $i$ is even, we obtain that $C = v_{k+1} e_i v_{i+1} \ldots e_k v_{k+1}$ is a strong cycle of length $2(k-i) + 2$ in $H$, impossible. 

Point (b): suppose by way of contradiction that $e_{k+1}$ is incident to a hypervertex $v_i$ ($i \leq k$), and let us choose such an $i$ maximum. If $i$ is odd, we have $\sigma(v_{k+1}) = e_{k+1}$, $v_i \in N_{H'}(e_{k+1}) - \{v_{k+1}\}$ and $\sigma(v_i) \neq \perp$, which contradicts $(C_1)$. If $i$ is even, we have $\sigma(v_i) = \perp$ and $v_{i-1}, v_{k+1} \in N_{\sigma}(v_i)$, which contradicts $(C_3)$.
\qed
\end{proof}

\begin{proposition} \label{prop5} Suppose that we have $(R,\sigma)$ assignment for $(H',v_0)$.
\begin{enumerate}
\item If $\sigma(v_0) = \perp$, Player 1 has a winning strategy in $\G(H',v_0)$.
\item If $\sigma(v_0) \neq \perp$, Player 2 has a winning strategy in $\G(H',v_0)$.
\end{enumerate}
\end{proposition}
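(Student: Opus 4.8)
The plan is to mirror the structure of Proposition \ref{prop1}, maintaining a play of the game together with an invariant capturing that the current play stays ``inside'' the strategy encoded by $(R,\sigma)$. The notion of a path being \emph{compatible with $(R,\sigma)$} has been set up precisely for this purpose, and Lemma \ref{lem3} is the engine that lets the active player extend a compatible play. So the proof reduces to: (i) identifying who the ``active'' player is in each case, (ii) exhibiting that player's move, and (iii) checking that the resulting longer play is still compatible, hence the induction continues and the opponent is the one who eventually gets stuck.

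First I would treat case (1), where $\sigma(v_0) = \perp$. Here Player~1 plays on even-indexed turns. I would maintain the invariant that after each of Player~1's moves the current play $P = v_0 e_0 \ldots v_k e_k$ (with $k$ even) is compatible with $(R,\sigma)$. The base case is $P = v_0 e_0$: since $\sigma(v_0) = \perp$, condition (b) of compatibility holds at $i=0$. For the inductive step, suppose it is Player~2's turn with a compatible play ending at an even index. If Player~2 cannot move, he loses and we are done; otherwise Player~2 appends some $v_{k+1} e_{k+1}$ forming a strong path, with $v_{k+1} \in N_{H'}(e_k) - \{v_k\}$. Because $\sigma(v_k) = \perp$, condition $(C_2)$ applied to the hyperedge $e_{k+1} \in N_{H'}(v_{k+1})$ forces some neighbour of $e_{k+1}$ to be winning; the key point to verify is that $v_{k+1}$ itself must satisfy $\sigma(v_{k+1}) \neq \perp$, which I expect to follow from $(C_2)$ applied at $v_k$ together with the strong-path structure. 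Then it is Player~1's turn: I invoke Lemma \ref{lem3} with the compatible play and the vertex $v_{k+1}$ (whose $\sigma$-value is a hyperedge) to conclude that Player~1 can append $v_{k+2} = \sigma(v_{k+1})$'s neighbour structure---more precisely, Player~1 reuses $e_{k+1} = \sigma(v_{k+1})$ is already on the board, so I must be careful: Player~1 actually selects the \emph{next} losing vertex. The invariant is restored because Lemma \ref{lem3} guarantees the extension stays a strong path, and $(C_1)$ guarantees the newly reached even-index vertex is losing.

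For case (2), where $\sigma(v_0) \neq \perp$, the roles swap: now Player~2 is the one who follows the strategy, and I would run the symmetric argument, maintaining compatibility after each of Player~2's responses. The relevant base observation is that $(C_1)$ applied to $v_0$ forces $\sigma(v_0) = e_0$ and the neighbours of $e_0$ (namely $v_1$) to be losing, so after Player~1's forced opening move the play is again compatible with the parity shifted appropriately. The alternation of $(C_1)$ (to extend from a winning vertex via its assigned hyperedge) and $(C_2)$ (to guarantee the opponent lands on a winning vertex) drives the induction exactly as before.

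The main obstacle, as in the graph case, is \textbf{termination}: showing that the maintained play cannot continue forever and hence the losing player genuinely runs out of moves. Since a play is a \emph{path} in the finite hypergraph $H'$, no vertex repeats, so the play length is bounded by $|V'|$ and the game must end---the finiteness of $H'$ resolves this cleanly, and the compatibility invariant guarantees that whenever the strategic player is to move he \emph{can} move (via Lemma \ref{lem3} or $(C_1)$), so the player who cannot move is always the opponent. The secondary subtlety I expect to spend care on is the precise bookkeeping of which hyperedge is appended by whom: in this hypergraph game each turn appends both a hypervertex and a hyperedge, and I must ensure that the hyperedge $\sigma(v)$ selected by the strategy is genuinely available (not already used), which is exactly what the ``distinct hyperedges'' half of Lemma \ref{lem3} supplies. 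Condition $(C_3)$ is what prevents the strategy from directing play back onto an already-visited vertex, playing the same role the $\sigma^{-1}$ bound played in Proposition \ref{prop1}.
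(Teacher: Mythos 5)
Your overall plan (maintain a compatible play, alternate $(C_1)$/$(C_2)$, use Lemma \ref{lem3} for the strong-path extension, and settle termination by finiteness of paths) is the right one and matches the paper's proof in outline. But the inductive step for case (1) contains a genuine gap: you have assigned the wrong roles to the two players within a round, and as a consequence you ask condition $(C_2)$ to do something it cannot do. In the hypergraph setting $(C_2)$ is \emph{existential}: for a losing vertex $v$ (i.e.\ $\sigma(v)=\perp$) and each hyperedge $e$ incident to $v$, it only guarantees that \emph{some} $u \in N_{H'}(e)$ has $\sigma(u) \neq \perp$. So when you write that Player 2 moves from $v_k$ with $\sigma(v_k)=\perp$ and that ``$v_{k+1}$ itself must satisfy $\sigma(v_{k+1}) \neq \perp$, which I expect to follow from $(C_2)$,'' this fails: Player 2 is free to land on any vertex of a hyperedge incident to $v_k$, and nothing forces his choice to be the winning one. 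This is precisely why Proposition \ref{prop5} reverses the conclusion of Proposition \ref{prop1} (here $\sigma(v_0)=\perp$ makes Player 1 the \emph{winner}): the player who stands on a $\perp$-vertex is the \emph{strategic} player, who uses the existential $(C_2)$ to \emph{find} a winning $v_{k+1} \in N_{H'}(e_k)$ and then plays $e_{k+1} = \sigma(v_{k+1})$, with Lemma \ref{lem3} certifying that this is a legal strong-path extension. You appear to have imported the parity structure of the graph case, where the graph version of $(C_2)$ is universal over all neighbours.

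Concretely, the correct round structure is: Player 1 moves when $k$ is even (so $\sigma(v_k)=\perp$; note $k-1 \equiv 1$ there, so it is indeed his turn, starting from $P = v_0 e_0$), chooses $v_{k+1}$ via $(C_2)$ applied to $v_k$ and the hyperedge $e_k$, and sets $e_{k+1}=\sigma(v_{k+1})$; Player 2 moves when $k$ is odd, where $e_k = \sigma(v_k)$, and it is the \emph{universal} clause of $(C_1)$ --- every $u \in N_{H'}(e_k)-\{v_k\}$ satisfies $u \in R$ and $\sigma(u)=\perp$ --- that constrains whatever vertex he lands on to be losing, while he may pick $e_{k+1}$ freely (which is harmless, since Player 1's subsequent appeal to $(C_2)$ quantifies over all hyperedges incident to $v_{k+1}$). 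Your text is also internally inconsistent on this point (you first say Player 1 plays at even indices, then analyse ``Player 2's turn with a play ending at an even index''), and the sentence about Player 1 ``reusing $e_{k+1}=\sigma(v_{k+1})$ already on the board'' reflects the same confusion. Once the roles are put right, the remaining ingredients you list (Lemma \ref{lem3}, $(C_3)$ against revisits, finiteness for termination) assemble into the paper's argument.
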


\begin{proof} We only prove (1), since (2) follows by a similar argument. We maintain the following invariant: at step $k$ of the game, the current play is $P = v_0 e_0 v_1 e_1 \ldots v_k e_k$ with $P$ path in $H'$ compatible with $(R,\sigma)$. Player 1 uses the following strategy: if $P = v_0 e_0 v_1 e_1 \ldots v_k e_k$ with $k$ even, he chooses $v \in N_{H'}(e_k) - \{v_k\}$ such that $\sigma(v) \neq \perp$, and he plays $v_{k+1} = v$ and $e_{k+1} = \sigma(v)$. We show by induction on $k$ that the invariant is preserved.

Suppose first that $k$ is odd. We have $P = v_0 e_0 \ldots v_k e_k$ path in $H'$. Player 2 then chooses $v_{k+1}, e_{k+1}$ such that $P' = P v_{k+1} e_{k+1}$ is a strong path in $H'$. We need to show that $P'$ is compatible with $(R,\sigma)$. Points (a)-(b) hold by induction hypothesis and since $v_k \in R, \sigma(v_k) = e_k$ imply $v_{k+1} \in R$ and $\sigma(v_{k+1}) = \perp$. Point (c) holds by choice of $v_{k+1},e_{k+1}$. 

Suppose now that $k$ is even. We have $P = v_0 e_0 \ldots v_k e_k$ path in $H'$. Player 1 then chooses $v \in N_{H'}(e_k) - \{v_k\}$ such that $\sigma(v) \neq \perp$, and plays $v_{k+1} = v$ and $e_{k+1} = \sigma(v)$. Observe that this is possible: by induction hypothesis, we have $v_k \in R$ and $\sigma(v_k) = \perp$, and by $(C_2)$ we find such a $v$ and we have $v \in R$. We need to show that $P' = P v_{k+1} e_{k+1}$ is compatible with $(R,\sigma)$. Points (a)-(b) hold by induction hypothesis and since $v_{k+1} \in R, e_{k+1} = \sigma(v_{k+1})$. Point (c) follows from Lemma \ref{lem3}.
\qed
\end{proof}

Building on Proposition \ref{prop5}, we now prove the following theorem. 

\begin{theorem} \label{thm4} Let $H$ be a balanced hypergraph with independent transversal $U$, and let $(H',v_0)$ constructed as above from $(H,U)$. Then: $H$ has a matching covering $U$ iff Player 2 has a winning strategy for $\G(H',v_0)$.
\end{theorem}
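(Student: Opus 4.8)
The plan is to prove Theorem~\ref{thm4} by connecting matchings of $H$ to assignments via Proposition~\ref{prop5}, mirroring the structure of Propositions~\ref{prop1}--\ref{prop2} from the graph case. The key observation is that by Proposition~\ref{prop5}, Player~2 wins $\G(H',v_0)$ exactly when there is an assignment $(R,\sigma)$ with $\sigma(v_0) \neq \perp$ — so the real content is to show that such an assignment exists iff $H$ has a matching covering $U$. I would therefore split the proof into two implications and route each through an assignment.

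First I would handle the direction \emph{matching $\Rightarrow$ Player~2 wins}. Given a matching $M$ of $H$ covering $U$, I would construct an assignment $(R,\sigma)$ with $\sigma(v_0)\neq\perp$ directly: set $\sigma(v_0) = e_0$, and set $\sigma(v_1) = \perp$; for each $v \in U$, let $\sigma(v) = m_v$ where $m_v \in M$ is the hyperedge covering $v$; and for the hypervertices covered by $M$ but matched "the other way," declare them losing. The set $R$ should be the vertices reachable under this scheme. The conditions $(C_1)$--$(C_3)$ then need checking: $(C_3)$ uses that $M$ splits $V$ (so each hypervertex sees at most one matching edge, giving $|N_\sigma(v)|\le 1$), while $(C_1)$--$(C_2)$ encode that winning hypervertices point to a matching hyperedge whose other endpoints are all losing. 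Then Proposition~\ref{prop5}(2) gives Player~2 a winning strategy.

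For the converse \emph{Player~2 wins $\Rightarrow$ matching}, I would invoke Proposition~\ref{prop5} contrapositively: if Player~2 wins, then there cannot be an assignment with $\sigma(v_0)=\perp$ (else Player~1 would win by~\ref{prop5}(1)), and conversely I want to extract a matching from an assignment with $\sigma(v_0)\neq\perp$. From such an assignment, for each $v \in U$ the hyperedge $\sigma(v)$ (which is defined and incident to $v$ by $(C_1)$) should serve as the matching edge covering $v$; the set $\{\sigma(v) : v \in U\}$ is the candidate matching $M$. I would verify $M$ splits $V$ using $(C_3)$: if some hypervertex $u$ were incident to two chosen hyperedges $\sigma(v),\sigma(v')$, then $v,v' \in N_\sigma(u)$, contradicting $|N_\sigma(u)|\le 1$. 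The covering of $U$ is immediate from $(C_1)$ giving $\sigma(v)$ incident to $v$.

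The main obstacle I anticipate is the existence of an assignment itself, and the precise bookkeeping of $R$. Unlike the graph case where Theorem~\ref{thm2} constructs an assignment algorithmically, here the paper states assignments are only known to exist non-constructively, so I expect the actual argument to rely on Theorem~\ref{thm3} (the B\'erge--Las~Vergnas min-max equality $\nu = \tau$ on partial subhypergraphs) to guarantee either a matching covering $U$ or a certificate of its absence, and then to translate that dichotomy into the two assignment types. The delicate part is ensuring the constructed $\sigma$ on the \emph{augmented} graph $H'$ correctly handles the auxiliary gadget $v_0, e_0, v_1, \{f_v\}$ so that $\sigma(v_0)\neq\perp$ tracks "matching exists," and that the balancedness hypothesis (which powered Lemma~\ref{lem3}'s strong-path/strong-cycle argument) is genuinely used — I would expect the existence half to reduce to applying the min-max theorem to a suitable partial subhypergraph capturing whether $U$ can be matched.
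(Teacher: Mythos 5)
Your proposal is correct and follows essentially the paper's own route: the paper packages your two constructions as Proposition~\ref{prop6} (an assignment with $\sigma(v_0) \neq \perp$ built from a matching covering $U$, and one with $\sigma(v_0) = \perp$ built from a maximum matching and a minimum transversal via Theorem~\ref{thm3} and complementary slackness), and then concludes via Proposition~\ref{prop5} exactly as you describe. The only cosmetic difference is that the paper never needs to extract a matching back out of an assignment --- the ``no matching'' case is handled directly by producing the $\sigma(v_0) = \perp$ assignment, which by Proposition~\ref{prop5}(1) hands the win to Player~1.
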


The proof of the Theorem proceeds by constructing an assignment $(R,\sigma)$ for $(H',v_0)$ in either case, whether or not $H$ has a matching covering $U$. This is the purpose of Proposition \ref{prop6} below.

\begin{proposition} \label{prop6} The following holds.
\begin{enumerate}
\item If $H$ has a matching covering $U$, we can find $(R,\sigma)$ assignment for $(H',v_0)$ such that $\sigma(v_0) \neq \perp$.
\item If $H$ has no matching covering $U$, we can find $(R,\sigma)$ assignment for $(H',v_0)$ such that $\sigma(v_0) = \perp$.
\end{enumerate}
\end{proposition}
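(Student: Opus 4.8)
The plan is to prove both parts by exhibiting an explicit assignment, with the balancedness of $H$ entering only through Theorem \ref{thm3}, and only in Part 2. Before treating either part I would record the structural fact that drives the whole argument: since $U$ is independent, every hyperedge is incident to at most one vertex of $U$, and since $U$ is a transversal, every hyperedge is incident to at least one; hence each $e \in E$ is incident to \emph{exactly one} vertex of $U$, which I denote $\mu(e)$. Two consequences follow at once. First, if $M$ is a matching of $H$ its edges are pairwise disjoint, so $\mu$ is injective on $M$ and $|\mu(M)| = |M|$, giving $\nu(H) \le |U|$. Second, a vertex $u \in U$ is covered by $M$ exactly when $u = \mu(e)$ for some $e \in M$, so $M$ covers $U$ iff $\mu(M) = U$ iff $|M| = |U|$. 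Thus $H$ has a matching covering $U$ if and only if $\nu(H) = |U|$, and otherwise $\nu(H) < |U|$.

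For Part 1, given a matching $M$ covering $U$ I would keep, for each $u \in U$, the unique edge $g(u) \in M$ incident to $u$ (unique because $M$ splits $V$); independence guarantees that the $g(u)$ are distinct and that $u$ is the only $U$-vertex incident to $g(u)$. I then set $R = V'$ and define $\sigma(v_0) = e_0$, $\sigma(v_1) = \perp$, $\sigma(u) = g(u)$ for $u \in U$, and $\sigma(x) = \perp$ for every other hypervertex, so that $\sigma(v_0) \neq \perp$ as required. The three conditions are then checked directly: $(C_1)$ for $v_0$ and for each $u$ holds because $N_{H'}(g(u)) - \{u\}$ consists of non-$U$ hypervertices, all sent to $\perp$; $(C_2)$ is where the transversal property is used, since $v_1$ lies only on $e_0$ and the $f_w$ (carrying the winning vertices $v_0$, resp. $w$), while every other losing hypervertex lies only on original edges, each of which carries its winning $U$-vertex $\mu(e)$; and $(C_3)$ holds because $M$ splits $V$, so each hypervertex lies on at most one $g(u)$, giving $|N_\sigma(v)| \le 1$, while $N_\sigma(v_0) = \emptyset$ as no assigned edge other than $e_0 = \sigma(v_0)$ is incident to $v_0$.

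For Part 2 I would use the equivalence above: no matching covering $U$ means $\nu(H) < |U|$. Let $M$ be a maximum matching of $H$; applying Theorem \ref{thm3} to $H$ gives $\tau(H) = \nu(H)$, so there is a transversal $C$ with $|C| = \tau(H) = \nu(H) = |M|$. Since the $|M|$ edges of $M$ are pairwise disjoint and $C$ must meet each of them while having only $|M|$ vertices, $C$ consists of exactly one vertex $c_e$ per edge $e \in M$ and nothing more. As $|\mu(M)| = |M| < |U|$, some $u^* \in U$ is uncovered by $M$, and since every vertex of $C$ lies on an $M$-edge, $u^*$ lies outside both $C$ and every matching edge. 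I then set $R = V'$, $\sigma(v_0) = \perp$, $\sigma(v_1) = f_{u^*}$, $\sigma(c_e) = e$ for each $e \in M$, and $\sigma(x) = \perp$ otherwise, so $\sigma(v_0) = \perp$ as required. Condition $(C_1)$ holds because disjointness of the $M$-edges makes $c_e$ the only vertex of $C$ on $e$, so all other vertices of $e$ are losing, and the only vertex of $f_{u^*}$ besides $v_1$ is the losing $u^*$; $(C_2)$ holds because $C$ is a transversal (so every original edge carries a winning vertex) and $v_1$ is winning (covering $e_0$ and every $f_w$); and $(C_3)$ holds because the assigned edges are the pairwise-disjoint $M$-edges together with $f_{u^*}$, with $u^*$ meeting no $M$-edge and hence contributing only $f_{u^*}$ to $N_\sigma$.

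The verifications in each part are mechanical; the crux is Part 2, and within it the two conceptual points are (i) the reduction of ``matching covering $U$'' to the single scalar inequality $\nu(H) < |U|$, made possible by the unique-$U$-vertex-per-edge observation, and (ii) the appeal to K\"onig's theorem for balanced hypergraphs (Theorem \ref{thm3})---the only place balancedness is used---to obtain a minimum transversal meeting each maximum-matching edge exactly once, which is precisely what lets $\sigma$ orient each matching edge toward a single winning vertex without violating $(C_1)$ or $(C_3)$. The one step demanding care is the choice of the root $u^*$: it must be taken among the $U$-vertices left uncovered by $M$, so that it avoids $C$ and every matching edge and thus contributes only the single pair $f_{u^*}$ to $N_\sigma$.
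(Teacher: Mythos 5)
Your proposal is correct and follows essentially the same route as the paper: Part 1 uses the matching edges through $U$ directly, and Part 2 takes a maximum matching $M$ and a minimum transversal, invokes Theorem \ref{thm3} to get $|T|=\nu(H)$, and exploits the resulting complementary slackness (each transversal vertex lies on exactly one $M$-edge, each $M$-edge contains exactly one transversal vertex) to define $\sigma$, rooting the losing side at an uncovered vertex of $U$ via $\sigma(v_1)=f_{u^*}$. The paper leaves the slackness verification as a sketch; you spell it out, but the underlying argument is the same.
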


\begin{proof}[Sketch] Point 1. Let $M$ be a matching covering $U$. We define $\sigma : V' \rightarrow E' \cup \{\perp\}$ as follows. Consider $v \in V'$. If $v = v_0$, we let $\sigma(v) = e_0$. If $v \in U$, we let $\sigma(v)$ be the unique hyperedge of $M$ incident to $v$. Otherwise, we let $\sigma(v) = \perp$. By definition, we have $\sigma(v_0) \neq \perp$. It is easily checked that $(V',\sigma)$ is an assignment for $(H',v_0)$.

Point 2. Suppose that $H$ has no matching covering $U$. Let $M$ be a maximum matching of $H$, and let $T$ be a transversal of $H$ of minimum cardinality. By Theorem \ref{thm3}, we have $|M| = |T|$. By the assumption, we find $r \in U$ not covered by $M$. We define $\sigma : V' \rightarrow E' \cup \{\perp\}$ as follows. We let $\sigma(v_0) = \perp$ and $\sigma(v_1) = f_r$. Consider $v \in V$. If $v \notin T$, we let $\sigma(v) = \perp$. If $v \in T$, we let $\sigma(v)$ be the unique hyperedge of $M$ incident to $v$. By definition, we have $\sigma(v_0) = \perp$. It is easily checked that $(V',\sigma)$ is an assignment for $(H',v_0)$, with points $(C_1)-(C_3)$ following from the complementary slackness conditions for $(T,M)$.
\qed
\end{proof}

The proof of Theorem \ref{thm4} then follows from Propositions \ref{prop5} and \ref{prop6}. Note that the latter result implies that there always exists an assignment for $(H',v_0)$. However, the proof does not give a direct way to construct it since it relies on Theorem \ref{thm3}. This leads us to the following open question.

\begin{question} \label{q1} Is there a combinatorial algorithm to compute an assignment for $(H',v_0)$?
\end{question}

\bibliography{paper} 

\begin{thebibliography}{10}

\bibitem{H35}
P.~Hall, ``{On representatives of subsets},'' {\em J. London Math. Soc.},
  vol.~10, pp.~26--30, 1935.

\bibitem{HV50}
P.~Halmos and H.~Vaughan, ``{The marriage problem},'' {\em {American Journal of
  Mathematics}}, vol.~72, pp.~214--215, 1950.

\bibitem{R67}
R.~Rado, ``{Note on the Transfinite Case of Hall's Theorems on
  Representatives},'' {\em J. London Math. Soc.}, vol.~42, pp.~321--324, 1967.

\bibitem{T54}
W.~Tutte, ``A short proof of the factor theorem for finite graphs,'' {\em
  {Canad. J. Math}}, vol.~6, pp.~347--352, 1954.

\bibitem{CCKV96}
M.~Conforti, G.~Cornu\'ejols, A.~Kapoor, and K.~Vu\v{s}kovi\'c, ``{Perfect
  matchings in balanced hypergraphs},'' {\em Combinatorica}, vol.~16,
  pp.~325--329, 1996.

\bibitem{AH00}
R.~Aharoni and P.~Haxell, ``{Hall's theorem for hypergraphs},'' {\em {J. Graph
  Theory}}, vol.~35, no.~2, pp.~83--88, 2000.

\bibitem{K31}
D.~K\"onig, ``{Gr\'afok \'es m\'atrixok},'' {\em {Matematikai \'es Fizikai
  Lapok}}, vol.~38, pp.~116--119, 1931.

\bibitem{E31}
J.~Egerv\'ary, ``{Matrixok kombinatorius tulajdons\'agair\'ol},'' {\em Mat \'es
  Fizikai Lapok}, vol.~38, pp.~16--28, 1931.

\bibitem{B57}
C.~Berge, ``Two theorems in graph theory,'' {\em {Proc. Natl. Acad. Sci.
  U.S.}}, vol.~43, pp.~842--844, 1957.

\bibitem{K55}
H.~Kuhn, ``{The Hungarian method for the assignment problem},'' {\em Nav. Res.
  Log. Quart.}, vol.~2, pp.~83--97, 1955.

\bibitem{E65}
J.~Edmonds, ``Paths, trees and flowers,'' {\em {Canad. J. Math.}}, vol.~17,
  pp.~449--467, 1965.

\bibitem{L75}
E.~L. Lawler, ``{Matroid intersection algorithms},'' {\em {Math. Prog.}},
  vol.~9, pp.~31--56, 1975.

\bibitem{GS86}
H.~Gabow and M.~Stallmann, ``{An augmenting path algorithm for linear matroid
  parity},'' {\em Combinatorica}, vol.~6, pp.~123--150, 1986.

\bibitem{B72}
C.~Berge, ``Balanced matrices,'' {\em {Math. Progr.}}, vol.~2, pp.~19--31,
  1972.

\bibitem{BLV70}
C.~Berge and M.~L. Vergnas, ``{Sur un th\'eor\`eme de type K\"onig pour
  hypergraphes},'' {\em {Annals N.Y. Acad. Sc.}}, vol.~175, pp.~32--40, 1970.

\bibitem{A74}
W.~Anderson, ``{Maximum Matching and the Game of Slither},'' {\em J. Comb.
  Theory, Ser. B}, vol.~17, pp.~234--239, 1974.

\bibitem{L72}
L.~Lov\'asz, ``{Normal hypergraphs and the perfect graph conjecture},'' {\em
  Discret. Math.}, vol.~2, no.~3, pp.~253--267, 1972.

\bibitem{ST16}
R.~Scheidweiler and E.~Triesch, ``{A note on the duality between matchings and
  vertex covers in balanced hypergraphs},'' {\em J. Comb. Optim.}, vol.~32,
  pp.~639--644, 2016.

\end{thebibliography}
\bibliographystyle{ieeetr}

\end{document}